\begin{document}

%%%%%%%  parameters to be filled in by copy-editor  %%%%%%%%%%

\setcounter{page}{239}
\publyear{24}
\papernumber{2181}
\volume{191}
\issue{3-4}

\finalVersionForARXIV
%\finalVersionForIOS

%%%%%%%%%%%%%%%%%%%%%%%%%%%%%%%%%%%%%%

\title{Nonatomic Non-Cooperative Neighbourhood Balancing Games}

\author{David Auger\thanks{Address for correspondence:   Universit\'e Paris-Saclay, UVSQ, DAVID,
                     78035, Versailles, France.}\thanks{This work was supported by a public grant as part of the
                     Investissement d'avenir project, reference ANR-11-LABX-0056-LMH, LabEx LMH.}
 \\
Universit\'e Paris-Saclay, UVSQ, DAVID\\
78035, Versailles, France\\
david.auger{@}uvsq.fr
\and Johanne Cohen\thanks{idem}\\
Universit\'e Paris-Saclay, CNRS, LISN  \\
 91405, Orsay, France \\
Johanne.Cohen{@}universite-paris-saclay.fr
\and Antoine Lobstein\\
Universit\'e Paris-Saclay, CNRS, LISN  \\
 91405, Orsay, France \\
Antoine.Lobstein{@}lri.fr
}

\maketitle

\runninghead{D.~Auger et al.}{Neighbourhood Balancing Games}

\begin{abstract}

  We introduce a game where players selfishly choose a resource and endure a cost depending on the number of players choosing nearby resources. We model the influences among resources by a weighted graph, directed or not.
  These games are generalizations of well-known games like Wardrop and congestion games. We study the conditions of equilibria existence and their efficiency if they exist. We conclude with studies of games whose influences among resources can be modelled by simple graphs.

\end{abstract}

\begin{keywords}
Neighbourhood Balancing Games, Nash equilibrium, graph theory.
\end{keywords}

\section{Introduction}

Summer is at its peak, and you spend the day at the beach with your whole family. Unfortunately, you were not the only one with this idea, and the beach is crowded with optimization specialists. Where will you put your towels?
You should find a place as lovely as possible, sheltered from the wind and close to the sea, and as isolated as possible from other people to enjoy a little privacy. Optimizing even more, it could be a good thing to leave no good spots too close to you because a new family coming to the beach could have the incentive to move there and ruin your  spot.

\medskip
Sit on a bus, a train, or an auditorium: Where should you sit to enjoy privacy and comfort so that others arriving later will have no incentive to sit too close?
Another (more technical and less antisocial) analogy is to imagine a load-balancing context with local influences. Assume that there are $n$ processors with $m$ computation tasks to solve but that solving a task on a processor impacts other processors around (say, for instance, that nearby processors share a resource of energy). How should the tasks be distributed to processors? Can this distribution be efficiently computed?

\subsection{Our model}

Many problems in wireless communication networks have been modelled using game-theoretic approaches and
 have received considerable attention (see \cite{lasaulce2011game} for an overview). To handle access to the communication medium, one must consider interference problems. For example, in the base station selection \cite{aryafar2013rat}, a user's cost (throughput) depends simultaneously on the number of users associated with the base station and its neighbourhood.

\medskip
To model these situations, we introduce the notion of {\it Neighbourhood Balancing Game} (NBG).
In such a game, players must choose a single resource in a set of $n$ possibilities. They endure a cost depending on how many players chose the same resource, but also nearby resources (as opposed, for instance, to congestion games where the cost of players depends only on the number of players who chose the same resource).
Thus stated, the model of NBG is quite general, so we shall consider particular cases and use graph-theoretic terminology to model this notion of proximity between resources and infer results from the graph topology.

We shall be very interested in {\it equilibria}, configurations of players' choices such that no player can improve its situation by changing location if other players do not move. We shall examine these equilibria's existence, structure, and computational tractability.

While we can develop a similar theory with atomic players, here we focus on the {\it nonatomic case} and consider a continuum of players, or {\it mass distribution}, to avoid side effects based on divisibility issues. As will be seen, this case is rich enough to provide many interesting examples.

Our model of nonatomic Neighbourhood Balancing Games lies somewhere between the model of nonatomic selfish routing (or "Wardrop model", see \cite{wardrop1952road}), selfish load balancing and congestion \linebreak games.

As will be seen, our framework is more straightforward than the previous ones in many cases, but many examples already manifest a great depth of complexity.

\subsection{State of the art}

The general framework of neighbourhood balancing games extends the congestion games defined by Rosenthal~\cite{Rosenthal1973class}. Congestion games are non-cooperative games in which the players compete to share a set of resources, and the cost of each player depends on the number of players choosing the same resource. There are two types of games: atomic (a finite number of players) and nonatomic (a continuous number of players).

%\medskip
For nonatomic congestion games, a well-known game theoretic traffic model is due to
Wardrop~\cite{wardrop1952road}. Wardrop equilibria have been introduced to model network behaviours in which travellers (for transportation context) or packets (telecommunication
context) choose routes they perceive as the shortest (see survey~\cite{correa2011wardrop} for more details).
This model was conceived to represent road traffic with the idea of an infinite number of agents responsible for an infinitesimal amount of traffic each.
Beckmann {\it et al.} \cite{beckmann1956studies} proved that Wardrop equilibrium is a solution of a convex optimization program and is unique.

In algorithmic game theory, a question arises about the impact of the degradation of the social cost of agents, taking into account their interests. This measure is called the {\it price of anarchy}~\cite{koutsoupias1999worst} (ratio between the worst equilibria and the
optimal solution).

This question has been intensively studied for congestion games.

%%\medskip
For nonatomic congestion games with affine costs, the price of anarchy is upper-bounded by $4/3$, and this bound is sharp \cite{roughgarden2002bad}. Indeed, the simple game introduced by Pigou is defined as a network with two parallel routes composed of a single edge connecting a source to a destination. Its price of anarchy is $4/3$. Moreover, using the same topology and considering that the network's cost functions are polynomials of degree at most $p$, Roughgarden \cite{roughgarden2002price} proved that
 the price of anarchy for such networks is large (the order of magnitude is $\Theta(p/\log p)$).
 Recently, some extensions of the price of anarchy have been thought about in this type of network, considering the traffic variation~\cite{colini2016price,colini2020selfish,wu2021selfishness}.

For atomic congestion games, the games that seem most similar to those studied in this work are
 load balancing games, or {\it selfish load balancing}. The underlying model, known as the {\it KP-model}, was introduced by Koutsoupias and Papadimitriou in \cite{koutsoupias1999worst}, where they define measures for the quality of equilibria. These problems have been studied widely (see chapter 20 of \cite{nisan2007algorithmic} for a survey).
 Indeed, besides their conceptual simplicity (some machines are shared between selfish users who decide which machines they will assign their tasks to), these problems are crucial in distributed environments. Koutsoupias and Papadimitriou proved the existence of Nash equilibria and computed the price of anarchy when machines are identical. Czumaj and V\"{o}cking \cite{CzumajVocking:Tight-bounds-for-worst-case} gave the tight bound $\Theta(\log m/\log \log m)$ for the price of anarchy on $m$ uniformly related machines.
In such environments, Christodoulou \emph{et al.} \cite{ChristodoulouKoutsoupiasNanavati:Coordination-Mechanisms} introduced the coordination mechanisms (a set of scheduling policies, one
for each machine) to obtain socially desirable solutions despite the selfishness of the agents. Several works analyzed the existence of pure Nash equilibria (see \cite{durr2009non,ImmorlicaLiMirrokni:Coordination-Mechanisms-for-Selfish,Thang:NP-hardness-of-pure-Nash} for example) and their prices of anarchy~\cite{abed2014optimal,AwerbuchAzarRichter:Tradeoffs-in-worst-case-equilibria, GairingLuckingMavronicolas:Computing-Nash-equilibria, roughgarden2015intrinsic,SchuurmanVredeveld:Performance-guarantees-of-local} for these models (for uniform machines or unrelated machines, for weighted or not players).

Moreover, the \textit{price of stability}~\cite{schulz2003performance} (ratio between the best equilibria and the optimal solution) is another measure when the game admits several equilibria.  Since nonatomic congestion games have a unique equilibrium, their prices of anarchy and stability are identical. This reason is why it has been studied only for atomic congestion games; the price of stability is known for congestion games with linear cost functions \cite{caragiannis2006tight, christodoulou2005price} and is upper-bounded \cite{christodoulou2015price} for  congestion games with polynomial cost
functions.

One of the closest games related to our model is the facility location game introduced by Vetta~\cite{vetta2002nash}.
Competitive facility location games deal with the placement of sites by competing market players. The \emph{facility location game} plays on weighted bipartite graphs in which each player chooses to open a single facility within the set of facilities the suppliers will serve, according to the distribution of users on some vertices. Given a strategy profile, a supplier $s$ serves the facilities closest to $s$ and receives a payment from these facilities. Vetta~\cite{vetta2002nash} proved that the facility location game always admitted a Nash equilibrium and gave an upper bound on the price of anarchy. This has been extended in literature in several papers.

\medskip
The \emph{discrete Voronoi game} corresponding to a simple model for the competitive facility location is very similar to neighbourhood balancing games. The discrete Voronoi game plays on a given graph with two players. Every player has to choose a set of vertices, and every vertex is assigned to the closest player. D\"urr and Thang~\cite{durr2007nash}, and Teramoto \emph{et al.}~\cite{teramoto2006voronoi} independently proved that deciding the existence of a Nash equilibrium for a given graph is NP-hard. Several works have been devoted to extending similar results on various types of graphs (cycles~\cite{durr2007nash}, trees~\cite{bandyapadhyay2015voronoi}).

\subsection{Outline of the paper}
Section~\ref{sec:model} is devoted to defining the games that we shall call NBG and the variants we shall study. Section~\ref{sec:equi} will describe the notion of equilibrium and $\delta$-strong equilibrium. We shall also prove a sufficient condition on the costs for the game to admit an equilibrium. Moreover, in Subsection~\ref{sec:complexite}, we prove that knowing whether a game admits a strong equilibrium is NP-complete. Furthermore, we adapt the notion of potential to our games to prove that symmetric graphical games admit a potential function. Subsections~\ref{sec:qualite:1} and~\ref{sec:qualite:2} focus on the efficiency of equilibria. Finally, in Section~\ref{sec:graph}, we focus on characterizing equilibria in games where very simple graphs represent interactions between resources.

\section{Neighbourhood Balancing Games (NBGs)} \label{sec:model}

In this section, we present and discuss our model, starting from the general case and refining particular cases and properties that will be needed later.

\subsection{General model}

Let $n \geq 1$ be an integer. Integers from $1$ to $n$, whose set we denote as $[n]$, will be called {\it vertices}$\,$; those are the resources (we use a graph-theoretical vocabulary for reasons that will appear soon). In the paper, symbols $i,j$ are for vertices; if omitted, their scope is the set $[n]$.

\medskip
A {\it mass distribution} on $[n]$ is a vector $\x=(x_1, x_2, \cdots, x_n)$ with real nonnegative entries, where $x_i$ is to be thought of as the mass on vertex $i$. The {\it total mass} of such a distribution is the sum
of all $x_i$'s for ${i \in [n]}$ and will be denoted by $r$.
A mass distribution can be considered as a continuum of {\it players}, each choosing a single vertex as a {\it strategy}. We also refer to such an imaginary player by the expression "infinitesimal unit of mass".

We shall denote $\Delta_r(n)$ the set of $n$-dimensional mass distributions
with total mass $r>0$. Given $\x \in \Delta_r(n)$, we say that a vertex $i$ is {\it charged} if $x_i >0$, otherwise it is {\it uncharged}. The {\it support} of $\x$ is the set of charged vertices.

\medskip
\noindent A {\it cost function} on vertex $i$ is a function
\begin{equation*} C_i : \x \mapsto C_i(\x) \in \R^+ \end{equation*}
mapping every mass distribution $\x \in \Delta_r(n)$ (for a given $r$) to a nonnegative real number, the {\it cost of
 vertex $i$ under mass distribution $\x$}.

\medskip
We interpret the cost $C_i(\x)$ as the price paid by individuals in the continuum of players that have chosen vertex $i$ as their strategy. Of course, each player has a goal of minimizing such a cost. This can also be viewed as people in location $i$ enduring a nuisance, based on the total number of people in their location and other locations.

\medskip
To aggregate all this, we define a {\it Neighbourhood Balancing Game} (NBG for short) as a triple $(n, r, C)$, where $n \geq 1$, $r >0$, and $C = (C_i)_{i \in [n]}$ is
the $n$-dimensional vector of cost functions $C_i : \Delta_r(n) \rightarrow \R^+$.

Given this, the definition is quite general, and we shall be interested in NBGs where cost functions have prescribed forms, which we shall describe afterwards.

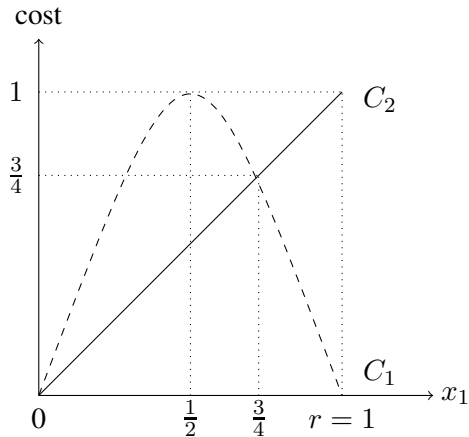
\begin{figure}[!b]
\vspace*{-2mm}
\centering
\begin{tikzpicture}
\node (origine) at (0,-0.3) {0};
\node at (4,-0.3) {$r=1$};
\node at (-0.3,4) {$1$};
\node at (2.9,-0.3) {$\frac{3}4$};
\node at (2,-0.3) {$\frac{1}2$};
\node at (-0.3,2.9) {$\frac{3}4$};
\node (x_1) at (5.5,0) {$x_1$} ;
\node (cost) at (0,5.0) {cost} ;
\draw[dotted] (4,0) -- (4,4) ;
\draw[dotted] (2,0) -- (2,4) ;
\draw[dotted] (0,4) -- (4,4) ;
\draw[dotted] (0,2.9) -- (2.9,2.9) ;
\draw[dotted] (2.9,0) -- (2.9,2.9) ;
\draw[->] (0,0) -- (5.2,0) ;
\draw[->] (0,0) -- (0,4.7) ;
\node at (4.5,3.9) {$C_2$} ;
\node at (4.5,0.3) {$C_1$} ;

\draw (0,0) -- (4,4) ;
\draw[dashed] (0,0) .. controls (2,5.3) .. (4,0) ;
%\node at (5.5,1) {$C_1$} ;
%\node at (5.5,2) {$C_2$} ;
\end{tikzpicture}
\caption{Costs functions for the example in Subsection \ref{sub:firstex}. The cost $C_1$ (resp. $C_2$) is represented by the dotted curve (resp. solid).  The $x$-axis is the mass of vertex $1$. }
\label{fig:ex1}
\end{figure}

\subsubsection{A first example: two-commodity dilemma game (see Figure~\ref{fig:ex1}) } \label{sub:firstex}

Let us consider a first, simple NBG: the NBG $(2, 1, C)$ with $C_1(x_1,x_2) = 4 x_1 x _2$ and $C_2(x_1,x_2)=x_1$. This can be thought of as a two-commodity dilemma with a total mass $1$ of players simultaneously deciding between commodities $1$ and $2$, then enduring the cost $C_i$ depending on their choice and the repartition of the mass.

\noindent Since for all $(x_1,x_2) \in \Delta_1(2)$ we have $x_1+x_2=1$, we can rewrite these costs as functions of $x_1$ only i.e. $C_1(x_1)= 4x_1(1-x_1)$ ans $C_2(x_1)=x_1$; these are depicted in Figure~\ref{fig:ex1}. As can be seen, costs are equal when $x_1=0$ and $x_1=\frac{3}{4}$. When $x_1 < \frac{3}{4}$, vertex $2$ has a lower, hence better cost. Hence, the mass on vertex $1$ might want to move to $2$, decreasing $x_1$ in the process. When $x_1 > \frac{3}{4}$, the cost is better on vertex $1$, hence the mass on $2$ might want to move to $1$, increasing $x_1$. Thus, we can expect that if we let the mass continuously move, starting from $x_1 \neq \frac{3}{4}$, we would obtain stabilization at either $x_1=0$ or $x_1=1$ depending on the case. If $x_1$ is exactly $\frac{3}{4}$, the two costs are equal, and we can interpret this as a case where no player can improve its cost by switching from one vertex to the other.

\medskip
We shall define mass distributions such as $x_1=0$, $x_1=\frac{3}{4}$ and $x_1=1$, where no mass has the incentive to move, as equilibria, and it will be our main endeavour to try and understand them; see Section~\ref{sec:equi} for a definition.

\subsection{Graphical NBGs}

\subsubsection{Definition}

In this section, we define a subclass of NBGs, which will be our main focus for the rest of the paper.
In addition to being a natural and more straightforward case, Subsection~\ref{sub:whygraphical} will  explain the reason for this particular class.

\medskip
We now consider only the cost functions of the form
\[C_i(\x) = f_i(x_i) + \sum_{j \neq i} \alpha_{j,i} x_j\]
where:
\begin{enumerate}
\renewcommand\labelenumi{\theenumi}
     \renewcommand{\theenumi}{(\roman{enumi})}
%%%%% R^+
\item $f_i : [0,r] \rightarrow \R^+$ is a continuous non-decreasing function, such that $$t > 0 \Rightarrow f_i(t) > 0$$ (hence values
 are positive except possibly for $t=0$);
\item $\alpha_{j,i}$, for every $i \neq j$ in $[n]$, are nonnegative real
 numbers.
\end{enumerate}
Functions $f_i$ are called \emph{vertex-cost functions} and are not to be confused
with proper cost functions $C_i$ that are deduced from $f_i$'s and $\alpha_{i,j}$'s.

\medskip
We call such a game a \emph{graphical} NBG and it will be denoted \[(n, r , f,\alpha),\]
where $f = (f_i)_{i \in [n]}$ and $\alpha = (\alpha_{i,j})_{i\neq j \in [n]}$.
If furthermore for all $i,j \in [n]$, $i\neq j$, we have $\alpha_{j,i} = \alpha_{i,j}$, we say that the game is \emph{symmetric}.

\medskip
To a graphical NBG, we can associate an \emph{underlying graph} (hence the name), which is either
a directed graph $(V, A)$ (in the general, non-symmetric case) or an undirected graph
$(V,E)$ (if the game is symmetric), where $V=[n]$ and
either
\[A = \{(i,j) : 1 \leq i \neq j \leq n \text{ and } \alpha_{i,j} > 0\} \]
is the set of arcs in the directed case or
\[E = \{ \{i,j\} : 1 \leq i \neq j \leq n \text{ and } \alpha_{i,j} > 0\} \]
is the set of edges in the symmetric case.

\medskip
Let us go back to the example in Subsection~\ref{sub:firstex}. It is not graphical since $C_1(\x) = 4x_1 x_2$ is not of the prescribed form. If we had $C_1(\x)= x_1 +x_2$ and still $C_2(\x)=x_1$, the example would have been graphical and even symmetric, with
$f_1(x_1) = x_2$, $f_2(x_2) = 0$ and $\alpha_{1,2} = \alpha_{2,1} = 1$.

\subsubsection{Special case where there are two resources ($n=2$)}

The case $n=2$ is, of course, more straightforward than the general case but is still rich enough
to provide examples and counterexamples.
We shall represent a symmetric graphical NBG as:

\begin{center}
 \begin{tikzpicture}
  \node[draw,circle] (a) at (0,0) {$1$};
  \node[draw,circle] (b) at (4,0) {$2$} ;
  \draw (a) edge (b) ;
  \node (t) at (2,.3) {$\alpha_{1,2}= \alpha$} ;
  \node (f1) at (0,-.6) {$f_{1}(x_1)$} ;
  \node (f2) at (4,-.6) {$f_2(x_2) $} ;
 \end{tikzpicture}
\end{center}

The following proposition will be used to build examples. It shows that in the case
$n=2$, if some regularity conditions are satisfied,
we can always find $\alpha_{1,2}$ and vertex-cost functions to obtain
any couple of cost functions on vertices, up to translation.

\begin{proposition} \label{prop:exbuilder}
Let $D_1$ and $D_2$ be two continuously differentiable functions on $\Delta_r(2)$ for a given $r>0$. We can find a real number $c$, a positive $\alpha>0$ and two functions $f_1$ and $f_2$
such that $(2,r,f=(f_1,f_2),\alpha)$ is a symmetric graphical NBG with cost functions $C_1 = D_1 + c$ and $C_2 = D_2 + c$.
\end{proposition}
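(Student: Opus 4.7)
The plan is to solve explicitly for $\alpha$, $c$, and the vertex-cost functions $f_1, f_2$ from the defining relations of a symmetric graphical NBG. Since any mass distribution in $\Delta_r(2)$ is determined by $x_1 \in [0,r]$ (via $x_2 = r - x_1$), I view $D_1$ and $D_2$ as functions of $x_1$ alone by setting $\widetilde{D}_i(t) := D_i(t, r-t)$. With this parametrization, the prospective cost functions become $C_1(t) = f_1(t) + \alpha(r-t)$ and $C_2(t) = f_2(r-t) + \alpha t$. Imposing $C_i = \widetilde{D}_i + c$ and performing the substitution $t \mapsto r-t$ in the second equation yields the unique candidates
\begin{equation*}
f_1(t) = \widetilde{D}_1(t) + \alpha t - \alpha r + c, \qquad f_2(t) = \widetilde{D}_2(r-t) + \alpha t - \alpha r + c.
\end{equation*}

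The remaining task is to tune $\alpha > 0$ and $c$ so that $f_1, f_2$ satisfy condition (i): continuous, non-decreasing, nonnegative, and strictly positive on $(0,r]$. Continuity is automatic. Differentiating gives $f_1'(t) = \widetilde{D}_1'(t) + \alpha$ and $f_2'(t) = -\widetilde{D}_2'(r-t) + \alpha$. Because $D_1, D_2$ are continuously differentiable, each $\widetilde{D}_i'$ is continuous on the compact interval $[0,r]$ and hence bounded; choosing $\alpha$ strictly larger than $\max\bigl\{\sup_{[0,r]}|\widetilde{D}_1'|,\ \sup_{[0,r]}|\widetilde{D}_2'|\bigr\}$ therefore makes both $f_i$ non-decreasing and simultaneously guarantees $\alpha > 0$.

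With $\alpha$ now fixed, positivity is handled by translation: since $f_1, f_2$ are non-decreasing, $f_i(t) \geq f_i(0)$ on $[0,r]$, so it suffices to pick $c$ large enough to force $f_1(0), f_2(0) > 0$, for instance $c := \alpha r + 1 - \min\{\widetilde{D}_1(0), \widetilde{D}_2(r)\}$. This yields $f_i(t) > 0$ on all of $[0,r]$, covering both the nonnegativity and the strict positivity for $t>0$ required by (i). The quadruple $(\alpha, c, f_1, f_2)$ then realizes the desired symmetric graphical NBG by construction, the cost identities $C_i = D_i + c$ following directly from the way $f_1, f_2$ were defined.

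I do not anticipate a serious obstacle; the argument is essentially linear. The one point worth highlighting is the role of the hypothesis: mere continuity of $D_1, D_2$ would not obviously be enough, whereas continuous differentiability on the compact interval $[0,r]$ provides bounded slopes, allowing a single positive constant $\alpha$ to absorb the variation of both $\widetilde{D}_i$ and restore monotonicity of the $f_i$.
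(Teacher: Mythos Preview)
Your proof is correct and follows essentially the same approach as the paper: solve for $f_1,f_2$ from the required cost identities, then use boundedness of the derivatives of $D_1,D_2$ on the compact interval $[0,r]$ to pick $\alpha$ large enough for monotonicity, and finally shift by a suitable $c$ for positivity. Your treatment is in fact slightly tidier in that you explicitly secure $\alpha>0$ and give a concrete choice of $c$, whereas the paper leaves these as implicit ``large enough'' choices.
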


\begin{proof}
For $x \in \Delta_r(2)$ we want
\[ D_1(x_1,x_2) + c = C_1(x_1,x_2) = f_1(x_1) + \alpha x_2 = f_1(x_1) + \alpha (r-x_1),\]
so we define on $[0,r]$
\[ f_1(x_1) = D_1(x_1,r-x_1) + c - \alpha (r - x_1) \]
and likewise
\[ f_2(x_2) = D_2(r-x_2,x_2) + c - \alpha (r - x_2) \]
where $c$ and $\alpha$ are to be defined.

What remains to do is to find values $\alpha \geq 0$ and $c$ such that $f_1$ and $f_2$ will be positive
and non-decreasing. Both functions are continuously differentiable and
for $x_1 \in [0,r]$
we have
\[f'(x_1) = \frac{ \partial D_1}{\partial x_1}(x_1,r-x_1) - \frac{ \partial D_1}{\partial x_2}(x_1,r-x_1) + \alpha, \]
hence if we want $f_1$ and $f_2$ non-decreasing it is enough to use any $\alpha$ satisfying
\begin{align*}
\alpha \geq \max \Big( & \max_{x \in [0,r]} \left\{ -\frac{ \partial D_1}{\partial x_1}(x,r-x) + \frac{ \partial D_1}{\partial x_2}(x,r-x) \right\}, \\[5pt]
  & \max_{x \in [0,r]} \left\{ \frac{ \partial D_2}{\partial x_1}(r-x,x) - \frac{ \partial D_2}{\partial x_2}(r-x,x)\right \} \Big)
\end{align*}
and then we can easily find $c$ such that both functions are positive.
\end{proof}

\subsection{Affine, linear, normal, uniform NBGs}
We shall restrict graphical NBGs by considering the following definitions for a graphical NBG $G = (n,r,f,\alpha)$:
\begin{itemize}
\item $G$ is \emph{affine} if all vertex-cost functions $f_i$ for $i \in V$ are of
the form
\[ f_i(x_i) = \alpha_{i,i} x_i + b_i,\]
where $\alpha_{i,i}, b_i$ are nonnegative. Cost functions are then of the form
\[C_i(\x) = b_i + \sum_{j \in V} \alpha_{j,i} x_j\]
hence are also affine functions. An affine graphical NBG can be symmetric or not.
\item if $G$ is affine and moreover all $b_i$'s satisfy $b_i=0$, then
the NBG is \emph{linear} and all cost
functions $C_i$ for $i \in V$ are of
the form
\[C_i(\x) = \sum_{j \in V} \alpha_{j,i} x_j ,\]
where the $\alpha_{i,j}$'s are nonnegative. A linear NBG is said to be \emph{normal} if  $\alpha_{i,i} = 1$ for all $i$.

\item $G$ is an $\alpha$-{\it uniform} NBG if it is a normal linear NBG where all nonzero
$\alpha_{i,j}$'s for $i \neq j$ have a common value $\alpha$, or in other words all cost functions
$C_i, i\in V$, are of the form

\[C_i(\x) = x_i + \alpha \sum_{j \in N[i]} x_j,\]
where $N[i] \subseteq V\setminus\{i\}$ is a set of vertices which we call
{\it neighbours}
of $i$ using graph-theoretic terminology. If the game is symmetric,
these neighbours are precisely the neighbours in the {\it underlying undirected} graph (otherwise, they are the in-neighbours in the underlying directed graph).
\end{itemize}

\smallskip
We shall recall these definitions when needed.

\subsection{Summary}

Here is a table to sum up the different cases:

\begin{center}
\scalebox{0.95}{
\begin{tabular}{|c|c|}
\hline
Class of NBG   & condition on cost functions $C_i$ \\
\hline
\hline
general & $C_i(\x) \geq 0$ \\
\hline
     & $C_i(\x) = f_i(x_i) + \sum_{j \neq i} \alpha_{j,i} x_j$ \\
graphical & $f_i$ continuous non-decreasing \\
     & $t > 0 \Rightarrow f_i(t) > 0$ \\
     & $f_i(0)\geq 0$, $\alpha_{j,i} \geq 0$ \\
\hline
affine  & $C_i(\x) = b_i + \sum_{j } \alpha_{j,i} x_j$ \\
     & $b_i\geq 0$, $\alpha_{j,i} \geq 0$ \\
\hline
linear & $C_i(\x) = \sum_{j } \alpha_{j,i} x_j$ \\
    & $\alpha_{j,i} \geq 0$ \\
\hline
normal & linear with $\alpha_{i,i} = 1$ \\
\hline
$\alpha$-uniform & normal linear with \\
         & $j\neq i, \alpha_{j,i} > 0 \Rightarrow \alpha_{j,i} = \alpha$ \\
\hline
\end{tabular} }
\end{center}

Each class contains the classes below it in the table, and all classes included in the Graphical class can be either symmetric ($\alpha_{j,i} = \alpha_{i,j}$ for all $i \neq j$) or not.

\section{Equilibria} \label{sec:equi}

Equilibria of NBGs are our main focus in this paper. An {\it equilibrium} is a mass distribution such that no infinitesimal player has the incentive to move because, in that distribution, every player has made the best choice for itself. This notion relates to the $n$-player games definition of Nash equilibria and the definitions of equilibria in other nonatomic games such as Wardrop games or nonatomic congestion games. We consider only {\it pure} equilibria here because infinitesimal players do not use mixed strategies.

%%\medskip
In this section, after properly defining equilibria and proving their existence in the general case, we establish an algorithmic complexity result for the existence of a refined notion of equilibria,   $\delta$-{\it strong equilibria}. We then introduce the notion of the potential function and prove that symmetric graphical games enjoy the existence of such a function, which is a tool to compute equilibria.

\subsection{Definition}

An {\it equilibrium} is a mass distribution such that no infinitesimal mass quantity can get a lower cost by moving from one vertex to another. Formally,
$\x^* \in \Delta_r(n)$ is an equilibrium if
\[ \forall i,j \in [n], \quad x^*_i > 0 \Rightarrow C_i(\x^*) \leq C_j(\x^*).\]
This definition implies that in an equilibrium, all charged vertices share
the same cost, while uncharged vertices have at least the same cost as the charged vertices.

\eject

For $\delta \geq 0$, a {\it $\delta$-strong equilibrium} is a $\x^* \in \Delta_r(n)$ such that no mass quantity $0 \leq \epsilon \leq \delta$ can improve its cost by
moving from one vertex to another. Formally,
\begin{equation} \label{ALeq1}
 \forall 0 \leq \epsilon \leq \delta, \forall i,j \in [n] , x^*_i
 \geq \epsilon \Rightarrow C_i(\x^*) \leq C_j(\x^* - \epsilon \cdot {\mathbf e}_i +
 \epsilon \cdot {\mathbf e}_j),
 \end{equation}
where $({\mathbf e}_i)_{i \in [n]}$ is the canonical basis of $\R^n$. If $\delta \leq \delta'$, a $\delta'$-strong equilibrium is $\delta$-strong, particularly all $\delta$-strong equilibria are $0$-strong, which amounts to saying that it is an equilibrium in the sense defined above.

We say that an equilibrium is {\it strong} if it is $\delta$-strong for some $\delta >0$. See Figures~\ref{fig:ex_eq1} and \ref{fig:ex_eq2} for examples of equilibria,
strong or not. Thanks to Proposition~\ref{prop:exbuilder}, we do not need to define these examples explicitly where only the curves' relative positions and shapes are relevant.

\begin{figure}[!h]
\centering
\scalebox{0.95}{
\begin{tikzpicture}

\node (origine) at (0,-0.3) {0};
\node at (5,-0.3) {$r$};
\node at (3.9,-0.3) {$\gamma$};
\node (x_1) at (6.5,0) {$x_1$} ;
\node (cost) at (0,5.0) {cost} ;

\draw[dotted] (3.92,0) -- (3.92,2.4) ;
\draw[dotted] (5.0,0) -- (5.0,2.4) ;
\draw[->] (0,0) -- (6.2,0) ;
\draw[->] (0,0) -- (0,4.7) ;
\draw (0,4) -- (5,2) ;
\draw[dashed] (0,4) .. controls (2,5) .. (5,1) ;
\node at (5.5,1) {$C_1$} ;
\node at (5.5,2) {$C_2$} ;
\end{tikzpicture} }\vspace*{-2mm}
\caption{We consider a game with $2$ vertices. The curves represent the costs of the two vertices as a function of the mass on vertex $1$.  There are three equilibria in this case: $x_1 = 0$, $x_1 = \gamma$ (second intersection of the costs curves) and $x_1=r$. Equilibria $x_1 = 0$ and $x_1 = r$ are strong; however $x_1 = \gamma$ is not strong since a small quantity $\epsilon$ of mass can always move from $2$ to $1$ and improve its cost.}
\label{fig:ex_eq1}
%%\end{figure}

%%\begin{figure}[h]
\vspace*{4mm}
\centering
\scalebox{0.95}{\begin{tikzpicture}
\node (origine) at (0,-0.3) {0};
\node at (5,-0.3) {$r$};
\node (x_1) at (6.5,0) {$x_1$} ;
\node (cost) at (0,4.7) {cost} ;

\draw[->] (0,0) -- (6.2,0) ;
\draw[->] (0,0) -- (0,4.4) ;
\draw[dotted] (5.0,0) -- (5.0,2.4) ;
\draw (0,4) -- (5,1) ;
\draw[dashed] (0,4) to (5,2) ;
\node at (5.5,2) {$C_1$} ;
\node at (5.5,1) {$C_2$} ;
\end{tikzpicture} }\vspace*{-2mm}
\caption{We consider a game with $2$ vertices. The curves represent the costs of the two vertices as a function of the mass on vertex $1$. There is only one equilibrium in $x_1 = 0$, which is not strong. }
\label{fig:ex_eq2}
\end{figure}

\subsection{Existence of equilibria} \label{sec:genEqExistence}

Here, we consider general cost functions
and show the existence of an equilibrium under the condition of continuity. The proof is adapted from Nash's proof of the existence of
a mixed symmetric Nash equilibrium in a symmetric game~\cite{nash1950equilibrium}.

\begin{theorem} \label{th:existence:g}
 Let $(n,r,C)$ be a general NBG, and suppose that cost functions
 $(C_i)_{i \in [n]}$ are continuous. Then, the game admits an equilibrium.
\end{theorem}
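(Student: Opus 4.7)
The plan is to adapt Nash's classical fixed-point argument to our nonatomic setting. The set $\Delta_r(n)$ is a nonempty, compact, convex subset of $\R^n$ (an $(n-1)$-dimensional simplex scaled by $r$), so Brouwer's fixed-point theorem applies to any continuous self-map. I will construct such a map whose fixed points are exactly the equilibria.

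First I would introduce the weighted mean cost
\[ \bar{C}(\x) = \frac{1}{r}\sum_{i \in [n]} x_i\, C_i(\x) \]
together with the gaps $g_i(\x) = \bar{C}(\x) - C_i(\x)$, which by construction satisfy the linear identity $\sum_{i \in [n]} x_i\, g_i(\x) = 0$. Writing $[t]^+ = \max(0, t)$, I would then consider the Nash-style map $\phi : \Delta_r(n) \to \Delta_r(n)$ defined by
\[ \phi_i(\x) = \frac{r\bigl(x_i + [g_i(\x)]^+\bigr)}{r + \sum_{j \in [n]} [g_j(\x)]^+}, \]
which intuitively transfers mass toward the vertices whose cost lies below the mean while preserving the total mass $r$. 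The denominator is bounded below by $r > 0$, so $\phi$ is well-defined and continuous (continuity of the $C_i$'s propagates through sums, quotients, and the $\max$), and one checks immediately that $\phi(\x) \in \Delta_r(n)$.

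The final step is to extract the equilibrium from a fixed point $\x^*$ provided by Brouwer's theorem. Rearranging $\phi_i(\x^*) = x_i^*$ yields, for every $i$,
\[ r\,[g_i(\x^*)]^+ = x_i^* \sum_{j \in [n]} [g_j(\x^*)]^+. \]
Multiplying by $g_i(\x^*)$, summing over $i$, and using the identity $\sum_i x_i^* g_i(\x^*) = 0$ on one side together with $g_i [g_i]^+ = ([g_i]^+)^2$ on the other yields $\sum_i ([g_i(\x^*)]^+)^2 = 0$. Hence $g_i(\x^*) \leq 0$ for every $i$, i.e.\ $C_i(\x^*) \geq \bar{C}(\x^*)$ everywhere. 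Combining this inequality with $\sum_i x_i^* g_i(\x^*) = 0$ and $x_i^* \geq 0$ forces $g_i(\x^*) = 0$ on the support of $\x^*$, so all charged vertices share the common cost $\bar{C}(\x^*)$ while every uncharged vertex has cost at least that value; this is exactly the equilibrium condition.

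The main obstacle is the design of the fixed-point map itself: the intuitive rule "move all mass to the cheapest vertex" is discontinuous, and a careless normalization would push $\phi$ outside $\Delta_r(n)$. Once the Nash-style averaging map above is in hand, existence follows from Brouwer and the algebraic verification that the fixed point satisfies the equilibrium condition is short.
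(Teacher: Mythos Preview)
Your proof is correct and follows the same high-level strategy as the paper: build a continuous Nash-style normalization map on $\Delta_r(n)$ and apply Brouwer's theorem. The implementation differs in one interesting way. The paper works with pairwise excesses $g_{i,j}(\x)=x_j\max(0,C_j(\x)-C_i(\x))$ and, at a fixed point, extracts the equilibrium by selecting a charged vertex of maximal cost, for which all $g_{k,j}$ vanish, collapsing the denominator. You instead compare each cost to the weighted mean $\bar C(\x)$ via $g_i=\bar C-C_i$ and exploit the built-in orthogonality $\sum_i x_i g_i=0$; multiplying the fixed-point relation by $g_i$ and summing kills the right-hand side directly and forces all $[g_i]^+$ to vanish. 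Both routes are short; yours trades the combinatorial ``pick the worst charged vertex'' step for a purely algebraic cancellation, which some readers may find cleaner.
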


\begin{proof}
 Define, for all $i,j \in [n]$, a function $g_{i,j} : \Delta_r(n) \rightarrow \R^+$ by
 \[g_{i,j} (\x) = x_j \cdot \max\left(0, C_j(\x) - C_i(\x) \right)\]
 and $F_i : \Delta_r(n) \rightarrow \R^+ $
 by
 \[ F_i(\x) = \frac{x_i + r \cdot \sum_{j \in [n]} g_{i,j}(\x)}{1 + \sum_{i,j \in [n]} g_{i,j}(\x)}.\]
 This definition guarantees that $F = (F_1, F_2, \cdots, F_n)$ takes values in $\Delta_r(n)$ and is continuous,
 hence
 has a fixed point $\x^*$ by Brouwer's fixed point theorem. Consider $k \in [n]$ such
 that $C_k(\x^*) = \max\{C_j(\x^*) : x^*_j > 0 \}$ and $x^*_k>0$. Then
 \[ C_j(\x^*) > C_k(\x^*) \Rightarrow x^*_j = 0,\]
 so that $g_{k,j} (\x^*) = 0$ for all $j\in [n]$. In particular, we have
 $F_k(\x^*) = x^*_k$, hence
 \[ \frac{x^*_k}{1 + \sum_{i,j \in [n]} g_{i,j}(\x^*)} = x^*_k.\]
 Since $x^*_k>0$, this implies that $g_{i,j}(\x^*) = 0$ for all $i,j \in [n]$.
 In particular, for $i,j$ with $x^*_j > 0$, we have $\max(0,C_j(\x^*) - C_i(\x^*)) = 0$,
 which is the definition of an equilibrium.
\end{proof}

\subsection{Games with no equilibria}

A non-continuous game does not need to have an equilibrium. Consider for instance $(2,1,C)$ with $C_1(\x)=1$, and $C_2(\x) = 2$ when $x_1 < \frac{1}{2}$, and $C_2(\x) = 0$ otherwise.
When $x_1 < \frac{1}{2}$, we have $x_2 >0$ but $C_1(\x) < C_2(\x)$, hence not an equilibrium, and for $x_1 \geq \frac{1}{2}$ we have $x_1 >0$ and $C_2(\x) < C_1(\x)$ hence also not an equilibrium.

\subsection{Structure and complexity of equilibria} \label{sec:complexite}

Here, we study the structure of equilibria in one particular case, namely normal linear NBGs. This enables us to derive complexity results for the problem of finding an equilibrium in a given NBG.

% \subsubsection{Normal linear NBGs}
So we consider a (possibly non-symmetric) normal linear NBG $G$ and its underlying directed graph $D$, i.e. its cost functions are of the form
\begin{equation} \label{eq:normalNBG}
C_i(\x) = x_i + \sum_{j \neq i} \alpha_{j,i} x_j, \vspace*{-2mm}
\end{equation}
where $\alpha_{i,j}$ may be different from $\alpha_{j,i}$.

\medskip
Let us recall that for a directed graph $D=(V,A)$, a {\it kernel} is a set of
vertices
$K \subseteq V$ such that:

\begin{itemize}
\item for any two distinct vertices $v,w \in K$, arcs $(v,w)$ and $(w,v)$ do not
 belong to $A$ (hence $K$ is a directed {\it stable} set);
\item if $z \not\in K$, then there is $v \in K$ with $(v,z) \in A$ ($K$ is a
 directed {\it dominating} set).
\end{itemize}

\begin{proposition}
 Let $G=(n,r,C)$ be an NBG with normal linear cost functions and underlying graph $D$; suppose furthermore that
 \[ \; \alpha_{i,j} > 0 \Rightarrow \alpha_{i,j} > 1 \; \;and \; \;\alpha_{i,j} + \alpha_{j,i} >2 \;\; (i\neq j). \]
 Then, the supports of strong equilibria are exactly the kernels of $D$.
\end{proposition}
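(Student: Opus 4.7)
The plan is to prove both inclusions separately, working throughout with the closed form (\ref{eq:normalNBG}). Let $\mathbf{x}^*$ denote the mass distribution under scrutiny and $S$ its support.

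For the forward direction (the support of a strong equilibrium is a kernel), I would verify the two kernel axioms. For stability, suppose for contradiction that distinct $v, w \in S$ satisfy $\alpha_{v,w} > 0$, hence $\alpha_{v,w} > 1$ by hypothesis. Since both vertices are charged, the equilibrium condition forces $C_v(\mathbf{x}^*) = C_w(\mathbf{x}^*)$, while a direct computation from the normal linear form gives
\[ C_w(\mathbf{x}^* - \epsilon \mathbf{e}_v + \epsilon \mathbf{e}_w) - C_w(\mathbf{x}^*) = \epsilon(1 - \alpha_{v,w}) < 0 \]
for every $\epsilon > 0$; this contradicts (\ref{ALeq1}) as soon as $\epsilon \leq \min(x_v^*, \delta)$, and swapping the roles of $v$ and $w$ also kills $\alpha_{w,v}$. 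For domination, suppose $z \notin S$ has no in-neighbour in $S$. Then every summand of $C_z(\mathbf{x}^*) = x_z^* + \sum_{j \neq z} \alpha_{j,z} x_j^*$ vanishes, so $C_z(\mathbf{x}^*) = 0$, whereas any $i \in S$ satisfies $C_i(\mathbf{x}^*) \geq x_i^* > 0$, contradicting the equilibrium inequality $C_i(\mathbf{x}^*) \leq C_z(\mathbf{x}^*)$.

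For the reverse direction (every kernel $K$ supports some strong equilibrium), I would take the uniform distribution $x_i^* = r/|K|$ for $i \in K$ and $x_i^* = 0$ otherwise. Kernel stability erases all cross terms on $K$, yielding $C_i(\mathbf{x}^*) = r/|K|$ for every $i \in K$; kernel domination together with the hypothesis supplies, for each $z \notin K$, some $v_0 \in K$ with $\alpha_{v_0,z} > 1$, so $C_z(\mathbf{x}^*) \geq \alpha_{v_0,z} \cdot r/|K| > r/|K|$. Hence $\mathbf{x}^*$ is already a plain equilibrium. To upgrade it to a $\delta$-strong one with $\delta = r/|K|$, I must check, for $v \in K$, $w \neq v$, and $\epsilon \leq \delta$, that $C_w(\mathbf{x}^* - \epsilon \mathbf{e}_v + \epsilon \mathbf{e}_w) \geq r/|K|$. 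If $w \in K$, stability gives $\alpha_{v,w} = 0$ and the new cost is simply $r/|K| + \epsilon$; if $w \notin K$, writing $T_w = \sum_{u \in K} \alpha_{u,w}$ and expanding gives
\[ C_w(\mathbf{x}^* - \epsilon \mathbf{e}_v + \epsilon \mathbf{e}_w) - \frac{r}{|K|} = \epsilon(1 - \alpha_{v,w}) + \frac{r}{|K|}(T_w - 1), \]
and I would verify nonnegativity by splitting on $\alpha_{v,w} \leq 1$ (both summands are nonnegative, the second strictly so by domination) and $\alpha_{v,w} > 1$ (use $T_w \geq \alpha_{v,w}$ and $\epsilon \leq r/|K|$).

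The main obstacle is this last subcase, where $v \in K$ is itself an in-neighbour of $w \notin K$: draining mass from $v$ simultaneously deflates the hostile term $\alpha_{v,w} x_v^*$ and inflates the self-term at $w$, so the sign of the net change of $C_w$ is a priori ambiguous. The hypothesis $\alpha_{v,w} > 1$ enters essentially here: it guarantees that $v$'s original contribution $\alpha_{v,w} \cdot r/|K|$ to $C_w(\mathbf{x}^*)$ exceeds $r/|K|$ by a margin large enough to absorb any transfer of size $\epsilon \leq r/|K|$, so that $C_w$ never drops below threshold and the estimate closes.
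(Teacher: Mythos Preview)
Your proof is correct. The reverse direction (kernel $\Rightarrow$ strong equilibrium) is essentially the paper's argument, written out with the notation $T_w$ and an explicit case split where the paper simply observes that ``the worst case is when $i'=i$''.

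The forward stability argument, however, is genuinely different from the paper's. The paper takes $i,j$ charged with $\alpha_{i,j}>0$, writes \emph{both} strong-equilibrium inequalities (moving $\epsilon$ from $i$ to $j$ and from $j$ to $i$), sums them, and after cancellation obtains $\alpha_{i,j}+\alpha_{j,i}\le 2$, contradicting the second hypothesis. You instead use that charged vertices share a common cost, so a single inequality $C_v(\mathbf{x}^*)\le C_w(\mathbf{x}^*-\epsilon\mathbf{e}_v+\epsilon\mathbf{e}_w)$ together with $C_w(\cdot)-C_w(\mathbf{x}^*)=\epsilon(1-\alpha_{v,w})<0$ already yields the contradiction. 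Your route is shorter and, notably, invokes only the hypothesis $\alpha_{v,w}>1$; the condition $\alpha_{i,j}+\alpha_{j,i}>2$ is never used anywhere in your argument, so you have in effect proved the proposition under a weaker assumption. The paper's symmetric summation, on the other hand, avoids appealing to the equal-cost property of equilibria and would still go through if one only knew $\alpha_{i,j}+\alpha_{j,i}>2$ without each coefficient individually exceeding~$1$ (though then the reverse direction would need adjusting).
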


\begin{proof}
 First, consider a kernel $K$ of $D$ of size $k$ and define $\x^*$
 by $x_i^* = \frac{r}{k}$ if $i \in K$ and $x_i^* = 0$ otherwise. Since $K$ is a stable set, the cost inside $K$ is exactly $\frac{r}{k}$, and since $K$ is dominating and any nonzero $\alpha_{i,j}$ is at least one, the cost is at least $\frac{r}{k}$ outside $K$. Let
 $0 \leq \epsilon \leq \frac{r}{k}$ and consider a mass $\epsilon$ moving from
 vertex $i \in K$ to $j$.

\medskip
 If $j \in K$, the cost in $j$ after the change will be $\frac{r}{k} + \epsilon$, which is worse. If $j \not\in K$, then there is an arc $(i',j)$ from a vertex $i' \in K$ to $j$. The worst case is when $i'=i$, and in this case, the cost in $j$ will be at least
 \[ \epsilon + \alpha_{i,j} (\frac{r}{k}-\epsilon) = (\alpha_{i,j}-1)(\frac{r}{k} - \epsilon) + \frac{r}{k} \geq \frac{r}{k}, \]
 which is also not better.
 Hence, $K$ is the support of a $\frac{r}{k}$-strong equilibrium.

\medskip
 Conversely, suppose now that $\x^*$ is a $\delta$-strong equilibrium of $G$ for a $\delta >0$, and $i,j \in [n]$ are two charged
 vertices with $\alpha_{i,j} >
 0$, i.e. $(i,j)$ is an arc of $D$. Consider an $\epsilon > 0$ as
 in Definition (\ref{ALeq1}), and suppose furthermore that $\epsilon < \min(\delta, x_i^*,x_j^*)$.
 Then
 \[C_i (\x^*) \leq C_j(\x^* + \epsilon \cdot {\mathbf e_j} - \epsilon \cdot {\mathbf e_i})\]
 and
 \[C_j (\x^*) \leq C_i(\x^* + \epsilon \cdot {\mathbf e_i} - \epsilon \cdot {\mathbf e_j}).\]
 Let $A_i = \sum_{k \neq i,j} \alpha_{k,i} x_k^*$ and
 $A_j = \sum_{k \neq i,j} \alpha_{k,j} x_k^*$. Then, the inequalities above can be
 written
 \[x_i^* + \alpha_{j,i} x_j^* + A_i \leq x_j^* + \epsilon + \alpha_{i,j} (x_i^* -
  \epsilon) + A_j\]
 and
  \[x_j^* + \alpha_{i,j} x_i^* + A_j \leq x_i^* + \epsilon + \alpha_{j,i} (x_j^* - \epsilon) + A_i.\]
  \eject

 \noindent Summing these two inequalities, we obtain after cancellations
 \[0 \leq 2\epsilon - \epsilon (\alpha_{i,j} + \alpha_{j,i}), \]
 so that $\alpha_{i,j} + \alpha_{j,i} \leq 2$, contradicting our hypothesis.
 So, there can be no arc between charged vertices,
 and the support of the equilibrium is a stable set.

\medskip
 Moreover, if $j$ is uncharged, then its cost must be positive by the definition
 of  equilibria; hence, there must be some charged vertex $i$ with
 $\alpha_{i,j} > 0$. Therefore, the support of a strong equilibrium is a kernel of $D$. \end{proof}

\noindent For a normal linear NBG or even an affine NBG, checking if a mass distribution is a strong equilibrium
is a simple task and amounts to checking a quadratic number of inequalities.
Now, observe that determining whether a given digraph admits
kernels is NP-complete (\cite{chva73}, \cite[p. 204]{gare79}). Also, there is a straightforward polynomial reduction from kernels
to strong equilibria by changing
a digraph into a normal linear NBG simply by choosing some $\alpha_{i,j} > 1$ on all arcs. Hence:

\begin{corollary}
 Determining whether an affine NBG admits a strong equilibrium is NP-complete.
\end{corollary}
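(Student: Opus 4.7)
The plan is to establish both membership in NP and NP-hardness, both of which follow from the preceding proposition combined with the classical NP-completeness of kernel existence in digraphs, as briefly indicated in the paragraph above the corollary.

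For membership in NP, I would guess a candidate support $S \subseteq [n]$ for the equilibrium. Imposing $x_i^* = 0$ for $i \notin S$, $\sum_i x_i^* = r$, and $C_i(\x^*) = C_j(\x^*)$ for all $i,j \in S$ yields a linear system in the unknowns $(x_i^*)_{i \in S}$, which by Cramer's rule admits a rational solution of polynomial bit-length whenever solvable; this solution can be computed in polynomial time. Verification then reduces to polynomially many linear inequalities: the equilibrium condition $C_i(\x^*) \leq C_j(\x^*)$ for $i \in S$ and $j \in [n]$, and the strong condition. Thanks to affineness, $C_j(\x^* - \epsilon \mathbf{e}_i + \epsilon \mathbf{e}_j) = C_j(\x^*) + \epsilon(\alpha_{j,j} - \alpha_{i,j})$ is linear in $\epsilon$, so $\x^*$ is $\delta$-strong for some $\delta > 0$ if and only if, for every pair $i \in S$, $j \in [n]$, either $C_i(\x^*) < C_j(\x^*)$ strictly, or else $C_i(\x^*) = C_j(\x^*)$ and $\alpha_{j,j} - \alpha_{i,j} \geq 0$. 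All of these are polynomial-time checks.

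For NP-hardness, I would reduce from kernel existence in a digraph $D = (V,A)$, which is NP-complete by the cited references. Define a normal linear (hence affine) NBG on $[n] = V$ with any fixed total mass $r > 0$ by setting $\alpha_{i,i} = 1$, $b_i = 0$, and $\alpha_{i,j} = \alpha$ for every arc $(i,j) \in A$ with a constant $\alpha > 2$ (the choice $\alpha = 3$ will do), while $\alpha_{i,j} = 0$ otherwise. The construction is clearly polynomial, and the two hypotheses of the preceding proposition hold: any nonzero off-diagonal coefficient equals $\alpha > 1$, and whenever $(i,j) \in A$ one has $\alpha_{i,j} + \alpha_{j,i} \geq \alpha > 2$. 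The proposition then identifies the supports of strong equilibria of this NBG with the kernels of $D$, so the NBG admits a strong equilibrium if and only if $D$ has a kernel.

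The main subtlety I expect is in the NP-membership argument, because the strong-equilibrium definition hides an existential quantifier $\exists \delta > 0$, and one must argue that affineness reduces it to a finite set of sign conditions on the $\alpha_{j,j} - \alpha_{i,j}$; once this is observed, the rest is routine. The hardness direction is essentially forced by the proposition, the only care needed being to pick $\alpha$ strictly larger than $2$ (rather than merely greater than $1$) so that the second hypothesis of the proposition holds uniformly over all arcs, even when the reverse arc is absent.
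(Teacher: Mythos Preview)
Your approach matches the paper's: NP-hardness via the reduction from kernel existence through the preceding proposition, and NP-membership via the observation that verification reduces to checking polynomially many linear inequalities. You are in fact more careful than the paper on the hardness side, correctly noting that one must take $\alpha > 2$ (not merely $\alpha_{i,j} > 1$ as the paper's text has it) so that the hypothesis $\alpha_{i,j} + \alpha_{j,i} > 2$ of the proposition holds even when the reverse arc $(j,i)$ is absent.

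One small point to tighten on the NP side: guessing only the support $S$ and solving the resulting $|S|\times|S|$ linear system may not suffice, because when that system is singular it can have a whole affine family of solutions, and the particular rational solution your solver returns need not be a strong equilibrium even though one exists in that family. The clean fix is simply to take the rational distribution $\x^*$ itself as the certificate (polynomial bit-length is guaranteed, since for a fixed support and a fixed pattern of which inequalities $C_i \leq C_j$ are tight, the strong equilibria form a relatively open rational polyhedron); then your verification step---which is correct as stated, including the reduction of the $\exists\,\delta>0$ quantifier to the sign conditions on $\alpha_{j,j}-\alpha_{i,j}$---goes through unchanged. The paper itself is no more explicit here, asserting only that verification amounts to checking a quadratic number of inequalities.
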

%% $C_j(x_\epsilon) + \alpha_{ij}(x_i -\epsilon) \leq C_i(x_\epsilon) + \alpha_{ij}x_i$
In general, checking if a distribution is a strong equilibrium depends on the model used to define functions. However, we still have that:

\begin{corollary}
 Determining whether an NBG admits a strong equilibrium is NP-hard.
\end{corollary}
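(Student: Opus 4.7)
The plan is to reduce from the problem handled by the previous corollary, exploiting the fact that every affine NBG is, by definition, an NBG. The previous corollary establishes that deciding the existence of a strong equilibrium for an affine NBG is NP-complete, and in particular NP-hard. Since the summary table makes clear that the affine class sits inside the general NBG class, any algorithm that decides strong equilibrium existence for general NBGs also decides it for affine ones. A polynomial reduction is then immediate: given an affine NBG instance encoded by the rationals $b_i$ and $\alpha_{j,i}$, we output the same instance viewed as a member of the general NBG class. This transformation is performed in polynomial time (indeed, in logspace) because it is simply the identity on the coefficient data.

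The one point worth stating explicitly is that the reduction is well-defined: for a general NBG, an ``instance'' must commit to some way of presenting the cost functions, but affine cost functions admit a canonical finite encoding (their coefficient tuple), and this encoding is a legitimate description of a general NBG whose cost functions happen to be affine. Thus any general-purpose decision procedure must in particular solve the affine case under this input model.

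First I would state the reduction formally: given $G = (n,r,C)$ with $C_i(\x) = b_i + \sum_{j} \alpha_{j,i} x_j$, define $G' = (n,r,C)$ as the same triple, now regarded as an instance of the general NBG problem. Next I would observe that the sets of strong equilibria of $G$ and $G'$ coincide, since strong equilibria are defined only in terms of the values of the cost functions (inequality (\ref{ALeq1})), which are unchanged. Finally I would conclude that a polynomial-time algorithm solving the general NBG strong equilibrium decision problem would, composed with this reduction, yield a polynomial-time algorithm for the affine version, contradicting the previous corollary unless $\mathrm{P} = \mathrm{NP}$.

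The main obstacle, which is really a matter of careful formulation rather than mathematical difficulty, is justifying that strong equilibrium existence for general NBGs is a meaningful computational problem at all. The authors sidestep this by refusing to claim membership in NP (hence the weakening from NP-complete to NP-hard), and the proof follows suit: we only claim hardness under any reasonable input model that can at least represent affine cost functions, which is a minimal assumption validated by the fact that affine NBGs are already highlighted as a subclass in the summary table.
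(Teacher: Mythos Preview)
Your proposal is correct and follows essentially the same approach as the paper: the paper simply observes that since checking whether a distribution is a strong equilibrium depends on the model used to define general cost functions, one cannot claim membership in NP, but NP-hardness is inherited directly from the affine subclass established in the previous corollary. Your write-up is more detailed (spelling out the identity reduction and the input-encoding caveat), but the underlying argument is the same.
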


As a conclusion to this subsection, let us mention a simple example where the game admits an equilibrium (not strong), and the graph has no kernel. We consider the normal linear NBG on three vertices \{1,2,3\} with arcs $(1,2)$, $(2,3)$ and $(3,1)$, with a coefficient $\alpha >0$ on each arc. Consider the mass distribution $(1/3,1/3,1/3)$: it is an equilibrium. However, this equilibrium is not strong, and the graph admits no kernel.

\subsection{Potential function}

Let $G = (n,r,C)$ be an NBG. A {\it potential function} for $G$ is a differentiable
function $\Phi$ defined on a neighbourhood of $\Delta_r(n)$ such that
\[\forall i \in [n], \frac{\partial \Phi(\x)}{\partial x_i} = C_i(\x).\]
A potential function is a tool to study equilibria. In particular

\begin{proposition} \label{prop:minPhi}
 If an NBG, $G = (n,r,C)$, admits a potential function $\Phi$, then the local minima of $\Phi$ on $\Delta_r(n)$ are equilibria of the game.
\end{proposition}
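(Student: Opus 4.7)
The plan is to prove the contrapositive by a first-order variation along an admissible direction inside the simplex $\Delta_r(n)$. Specifically, I would assume $\x^* \in \Delta_r(n)$ is a local minimum of $\Phi$ and show that the equilibrium condition
\[ x_i^* > 0 \;\Longrightarrow\; C_i(\x^*) \leq C_j(\x^*), \quad \forall j \in [n] \]
holds, arguing by contradiction: suppose there exist indices $i, j$ with $x_i^* > 0$ and $C_i(\x^*) > C_j(\x^*)$, and produce a nearby point in $\Delta_r(n)$ where $\Phi$ is strictly smaller.

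The key construction is the one-parameter family of mass distributions
\[ \x(\epsilon) = \x^* - \epsilon \cdot \mathbf{e}_i + \epsilon \cdot \mathbf{e}_j, \]
where $(\mathbf{e}_k)_{k \in [n]}$ is the canonical basis. I would first check feasibility: for every sufficiently small $\epsilon > 0$, the vector $\x(\epsilon)$ still has nonnegative coordinates (this uses exactly the hypothesis $x_i^* > 0$, while $x_j^* + \epsilon \geq 0$ is automatic), and its coordinates still sum to $r$, so $\x(\epsilon) \in \Delta_r(n)$.

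Next I would compute the right derivative at $0$ of the scalar function $\epsilon \mapsto \Phi(\x(\epsilon))$ using the chain rule and the defining relation $\partial \Phi / \partial x_k = C_k$ of a potential function:
\[ \left. \frac{d}{d\epsilon} \Phi(\x(\epsilon)) \right|_{\epsilon = 0^+} = -\frac{\partial \Phi}{\partial x_i}(\x^*) + \frac{\partial \Phi}{\partial x_j}(\x^*) = C_j(\x^*) - C_i(\x^*) < 0. \]
Consequently $\Phi(\x(\epsilon)) < \Phi(\x^*)$ for all sufficiently small $\epsilon > 0$, contradicting the assumption that $\x^*$ is a local minimum of $\Phi$ on $\Delta_r(n)$. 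This contradiction establishes the equilibrium inequalities and completes the proof.

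The only subtle point, which I would be explicit about, is that the variation $\x(\epsilon)$ must stay in the feasible region $\Delta_r(n)$; this is precisely why the equilibrium condition is required only at charged vertices $i$ (those with $x_i^* > 0$), since otherwise we could not decrease the $i$-th coordinate while remaining in $\Delta_r(n)$. No second-order information about $\Phi$ is needed — first-order reasoning suffices, and differentiability of $\Phi$ on a neighbourhood of $\Delta_r(n)$ ensures the chain rule applies even when $\x^*$ lies on the boundary of the simplex.
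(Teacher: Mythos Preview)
Your proposal is correct and follows essentially the same approach as the paper: both arguments move mass along the segment $\x^* - \epsilon\,\mathbf e_i + \epsilon\,\mathbf e_j$ inside $\Delta_r(n)$, differentiate $\Phi$ along this path using $\partial\Phi/\partial x_k = C_k$, and read off the equilibrium inequality from the sign of the derivative at $0$. The only cosmetic difference is that the paper argues directly (nonpositive function with value $0$ at $0$ forces $f'(0)\le 0$) whereas you phrase it as a contradiction, but the mathematics is identical.
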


\begin{proof}
 Let $\x^*$ denote a local minimum of $\Phi$ on $\Delta_r(n)$. In
 particular, if $x_i^* > 0$ and $j \in [n]$, $j \neq i $, there
 is an $\epsilon >0$ such that if $x_i^* - \epsilon \geq 0$, we have
 \[ \Phi(\x^*) \leq \Phi(\x^* - t \cdot \epsilon \cdot {\mathbf e_i} +
  t \cdot \epsilon \cdot {\mathbf e_j}) \]
 for all $t \in [0,1]$. Defining
 \[ f(t) = \Phi(\x^*) - \Phi(\x^* - t \cdot \epsilon \cdot {\mathbf e_i} +
  t \cdot \epsilon \cdot {\mathbf e_j}), \]
 we see that $f$ is differentiable on $[0,1]$ and is nonpositive while $f(0) =
 0$. Hence, we have $f'(0) \leq 0$ and thus
 \[ \epsilon \cdot C_i(\x^*) - \epsilon \cdot C_j(\x^*) \leq 0,\]
 which is the definition of equilibria.
\end{proof}

\noindent {\bf Remark.} Following the proof above, an interpretation of the potential function $\Phi$ is that for $|t|$ small enough, if $x_i >0$ we have
\[\Phi(\x+t\cdot {\mathbf e_i} - t\cdot {\mathbf e_j}) - \Phi(\x) = t \cdot (C_i(\x) - C_j(\x)) + o(|t|) ,\]
hence the total quantity of cost variation when moving between $i$ and $j$
can be deduced from the difference of potential.

\medskip
\noindent {\bf Remark.} An equilibrium corresponding to the minimum of a potential function is not necessarily strong. See Figure~\ref{fig:ex_eq2} for an example with a unique, not strong, equilibrium,
which must be the global minimum of the potential function (this example is a graphical symmetric NBG, so it admits a potential function by Proposition~\ref{prop:potentialFormula}).

\medskip
\noindent {\bf Remark.} The converse is false: an equilibrium does not always
correspond to a local minimum of the potential. See a counterexample in Figure~\ref{fig:eq_vs_phi}.

\begin{figure}[!h]
\centering
\begin{tikzpicture}

\node (origine) at (0,-0.3) {0};
\node at (-0.3,3.1) {2};
\node at (-0.3,1.55) {1};
\node at (5,-0.3) {$r=1$};
\node (x_1) at (6.5,0) {$x_1$} ;
\node (cost) at (0,3.8) {cost} ;

\draw[->] (0,0) -- (6.2,0) ;
\draw[->] (0,0) -- (0,3.5) ;
\draw (0,3) -- (5,3) ;
\draw[dotted] (5,0) to (5,3.2) ;
\draw[dotted] (0,1.5) to (5,1.5) ;
\draw[dashed] (0,3) to (5,1.5) ;
\node at (5.5,1.55) {$C_1$} ;
\node at (5.5,3) {$C_2$} ;
\end{tikzpicture}\vspace*{-2mm}
\caption{With $n=2$ and $r=1$ (so that $x_2=1-x_1$), consider $\alpha_{1,2}=\alpha_{2,1}=1$, $f_1(x_1)=1$ and $f_2(x_2)=1+x_2$.
This gives $C_1(\x)=2-x_1$, $C_2(\x)=2$, and $\Phi(\x)=x_1+(x_2+\frac{1}{2}x_2^2)+x_1x_2=(3-x_1^2)/2$ (this potential function is given by Equation~\eqref{eq:formulePhi} in the upcoming Proposition~\ref{prop:potentialFormula}; we can  check that this is indeed a potential. Otherwise, other potentials are equal to this one up to an additive constant; also note that a potential can be expressed with $x_1$ alone, with $x_2$ alone, or with both $x_1$ and $x_2$). There are two equilibria, one in $x_1=1$ which is a minimum of the potential function, and one
in $x_1=0$, which is a maximum.
}
\label{fig:eq_vs_phi}
\end{figure}
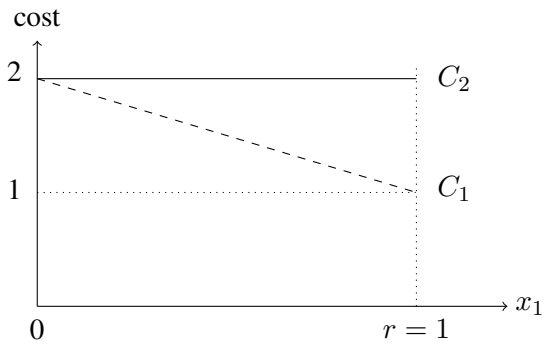

Note that since $\Phi$ is continuous on the compact set $\Delta_r(n)$, it admits a
global minimum on $\Delta_r(n)$, which gives another, more constructive, proof
of the existence of an equilibrium  when there is a potential function.

\smallskip

\subsection{Potential function and graphical NBGs} \label{sub:whygraphical}

\noindent
To build examples where there is a potential function, one could consider a differentiable function $\Phi$ and define $C_i(\x)$ as $\frac{\partial \Phi(\x)}{\partial x_i}$.

\medskip
However, to identify classes of games where there always exists a potential function, we can restrict our attention to cost functions which are decomposable in the following natural way:
\[C_i(\x) = f_i(x_i) + \sum_{j \neq i} f_{j,i} (x_j) \]
where the $f_i$'s and $f_{i,j}$'s are continuously differentiable functions (for
practical reasons).

\medskip
Supposing that the game
admits a potential function $\Phi$, by Schwarz's theorem, we must
have for all $i \neq j$
\[ \frac{\partial^2 \Phi}{\partial x_i \partial x_j}(\x)= \frac{\partial^2 \Phi}{\partial x_j \partial x_i}(\x),\]
so
\[\frac{\partial f_{j,i}(x_j)}{\partial x_j} = \frac{\partial
  f_{i,j}(x_i)}{\partial x_i}.\]
Therefore, these functions must be linear, of the form $f_{j,i} = c x_j$
and $f_{i,j} = c x_i$ for a $c \in \R$; whence the definition of
symmetric graphical games.

\medskip
The following is then easy to check.

\begin{proposition} \label{prop:potentialFormula}
 Let $G=(n,r,f,\alpha)$ be a symmetric graphical game. Define:
 \[ \Phi : \Delta_r(n) \longrightarrow \R\]
 by
 \begin{equation} \label{eq:formulePhi}
 \Phi(\x) = \sum_i \int_0^{x_i} f_i(t)dt + \sum_{i < j} \alpha_{i,j} x_i x_j.
 \end{equation}
 Then $\Phi$ is a potential function for $G$.
\end{proposition}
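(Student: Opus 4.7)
The plan is to verify the definition of a potential function directly: for each $i \in [n]$, I will compute $\frac{\partial \Phi(\x)}{\partial x_i}$ and check that it equals $C_i(\x) = f_i(x_i) + \sum_{j \neq i} \alpha_{j,i} x_j$. Since $f_i$ is continuous on $[0,r]$, the function $x_i \mapsto \int_0^{x_i} f_i(t)\,dt$ is continuously differentiable on a neighbourhood of $[0,r]$, and the bilinear terms $\alpha_{i,j} x_i x_j$ are obviously smooth, so differentiability of $\Phi$ on a neighbourhood of $\Delta_r(n)$ is not an issue.

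For the first sum in \eqref{eq:formulePhi}, only the term indexed by $i$ contributes when differentiating with respect to $x_i$, and by the fundamental theorem of calculus its derivative is $f_i(x_i)$. For the second sum $\sum_{k<\ell} \alpha_{k,\ell} x_k x_\ell$, the variable $x_i$ appears exactly in the terms where $k=i$ (with $\ell>i$) and those where $\ell=i$ (with $k<i$). Hence
\[ \frac{\partial}{\partial x_i} \sum_{k<\ell} \alpha_{k,\ell} x_k x_\ell = \sum_{\ell > i} \alpha_{i,\ell}\, x_\ell + \sum_{k<i} \alpha_{k,i}\, x_k. \]
Using the symmetry hypothesis $\alpha_{i,\ell} = \alpha_{\ell,i}$, the first sum becomes $\sum_{\ell>i} \alpha_{\ell,i}\, x_\ell$, so the two sums together equal $\sum_{j \neq i} \alpha_{j,i}\, x_j$.

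Adding the two contributions gives $\frac{\partial \Phi(\x)}{\partial x_i} = f_i(x_i) + \sum_{j \neq i} \alpha_{j,i}\, x_j = C_i(\x)$, which is exactly the defining property of a potential function, so $\Phi$ is a potential for $G$. There is no real obstacle here, as the argument is a routine calculation once the symmetry assumption is used to re-index the mixed partial derivatives; the only subtlety worth flagging is precisely this relabelling step, which is exactly what makes symmetry indispensable (and fits the motivation given right before the statement via Schwarz's theorem).
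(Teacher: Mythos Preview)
Your proof is correct and is exactly the routine verification the paper has in mind: the paper itself does not spell out the argument but simply states that the result ``is then easy to check,'' and your direct computation of $\partial \Phi/\partial x_i$ via the fundamental theorem of calculus and the symmetry $\alpha_{i,\ell}=\alpha_{\ell,i}$ is precisely that check.
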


\noindent {\bf Remark.} A graphical NBG which is not symmetric may admit no potential. Indeed, by the discussion above, if $\alpha_{i,j} \neq \alpha_{j,i}$ and the $f_i$'s
are continuously differentiable, then
we see by Schwarz's theorem that a potential function cannot exist.

\medskip
In the following, we investigate how our model of NBG leads to equilibria that can be good or bad from a social viewpoint.

\subsection{Social costs}

A social cost is a way to average all the costs paid by the infinitesimal players in the game and quantify by a single number the cost, hence the quality of equilibria, if we want to compare them.

\eject
\noindent We define two social costs:
\begin{itemize}
\item The {\it utilitarian social cost}, defined as
 \[ {\mathcal C}_u(\x) = \frac1{r} \sum_{i \in [n]} x_i C_i(\x), \]
is simply the average cost of an infinitesimal mass in the graph. It can be low even if a small fraction of the mass pays a high cost.

\item The {\it egalitarian social cost}, defined as
 \[ {\mathcal C}_e (\x) = \max_{i \in [n], x_i >0} C_i(\x)\]
 is the maximum cost encountered among vertices with a positive quantity of mass. Minimizing this cost is more complicated than minimizing the utilitarian cost since it considers all infinitesimal players.
\end{itemize}
We should note that for all $\x \in \Delta_r(n)$, we have ${\mathcal C}_u(\x) \leq {\mathcal C}_e(\x)$, and that there is equality if $\x$ is an equilibrium.

\subsection{A Braess-like Paradox}

Braess’s Paradox was discovered by Braess~\cite{braess1968paradoxon}
in the context of nonatomic selfish routing. In this model of routing a continuous mass of vehicles from a source to a destination, one shows the paradoxical phenomenon that opening a new road leads to a dramatic increase in the social cost of the (unique) equilibrium of the model.

Despite being even more straightforward than the nonatomic selfish routing model, we show that our NBG model admits examples that lead to similar conclusions.

\medskip
Consider the following example with two vertices:
\begin{center}
 \begin{tikzpicture}
  \node[draw,circle] (a) at (0,0) {$1$};
  \node[draw,circle] (b) at (4,0) {$2$} ;
  \draw (a) edge (b) ;
  \node (t) at (2,.3) {$\alpha_{1,2} = \frac{1}4$} ;
  \node (f1) at (0,-.6) {$f_{1}(x_1) = 1$} ;
  \node (f2) at (4,-.6) {$f_2(x_2) = x_2 + b_2$} ;
 \end{tikzpicture}
\end{center}
with total mass $r=1$. We can write a mass distribution as $\x=(x_1,1-x_1)$ and
costs, as a function of $x_1$, are:
\[C_1(\x) = 1 + \frac1{4} (1-x_1) = - \frac1{4} x_1 + \frac5{4}\]
and
\[C_2(\x) = 1-x_1 + b_2 + \frac1{4}x_1 = -\frac3{4} x_1 + 1 + b_2 .\]
If $b_2 \in [\frac{1}4, \frac3{4}]$, costs intersect in $2b_2 - \frac{1}2$ and
there are no other equilibria (see Figure~\ref{ALfig:ex_eq2AL}). The common value of both costs is then $\frac{11}8 - \frac{b_2}2$, which grows when $b_2$ decreases; it is also the social cost (utilitarian or egalitarian) in this equilibrium. This means that a lower vertex-cost function can lead  to a
worse equilibrium in the sense of both social costs.

\begin{figure}[ht]
%%\vspace*{-2mm}
\centering
%%\scalebox{0.95}{
\begin{tikzpicture}
%%% b=1/4
\node at (-2.5,-0.8) {$b_2=1/4$} ;
\node (origine) at (-4.0,-0.3) {0};
\node at (-2,-0.3) {$r=1$};
%\node at (-3,-0.3) {$1/2$};
\node (x_1) at (-1.3,0) {$x_1$} ;
\node (cost) at (-4,3.7) {cost} ;
\node at (-4.5,0.7) {1/2};
%\node at (-4.5,1.3) {3/4};
\node at (-4.3,1.9) {1};
%\node at (-4.5,2.2) {9/8};
\node at (-4.5,2.5) {5/4};
%\node at (-4.5,3.1) {3/2};
%\node at (-4.5,3.7) {7/4};
\draw[->] (-4,0) -- (-1.6,0) ;
\draw[->] (-4,0) -- (-4,3.4) ;
%\draw[dotted] (-4,2.2) -- (-3,2.2) ;
\draw[dotted] (-2.0,0) -- (-2.0,2.4) ;
\draw[dotted] (-4,0.7) -- (-2.0,0.7) ;
%\draw[dotted] (-3,0) -- (-3,2.2) ;
\draw[dotted] (-4,1.9) -- (-2.0,1.9) ;
\draw (-4,2.5) -- (-2,0.7) ;
\draw[dashed] (-4,2.5) to (-2,1.9) ;
\node at (-1.5,1.9) {$C_1$} ;
\node at (-1.5,0.7) {$C_2$} ;

%%% b=1/2
\node at (1.5,-0.8) {$b_2=1/2$} ;
\node (origine) at (0,-0.3) {0};
\node at (2,-0.3) {$r=1$};
\node at (1,-0.3) {$1/2$};
\node (x_1) at (2.7,0) {$x_1$} ;
\node (cost) at (0,4.2) {cost} ;
%\node at (-0.5,0.7) {1/2};
\node at (-0.5,1.3) {3/4};
\node at (-0.3,1.9) {1};
\node at (-0.5,2.2) {9/8};
\node at (-0.5,2.5) {5/4};
\node at (-0.5,3.1) {3/2};
%\node at (-0.5,3.7) {7/4};
\draw[->] (0,0) -- (2.4,0) ;
\draw[->] (0,0) -- (0,3.9) ;
\draw[dotted] (0,2.2) -- (1,2.2) ;
\draw[dotted] (2.0,0) -- (2.0,2.4) ;
\draw[dotted] (0,1.3) -- (2.0,1.3) ;
\draw[dotted] (1,0) -- (1,2.2) ;
\draw[dotted] (0,1.9) -- (2.0,1.9) ;
\draw (0,3.1) -- (2,1.3) ;
\draw[dashed] (0,2.5) to (2,1.9) ;
\node at (2.5,1.9) {$C_1$} ;
\node at (2.5,1.3) {$C_2$} ;
%%% b=3/4
\node at (5.5,-0.8) {$b_2=3/4$} ;
\node (origine) at (4.0,-0.3) {0};
\node at (6,-0.3) {$r=1$};
%\node at (5,-0.3) {$1/2$};
\node (x_1) at (6.7,0) {$x_1$} ;
\node (cost) at (4,4.7) {cost} ;
%\node at (3.5,0.7) {1/2};
%\node at (3.5,1.3) {3/4};
\node at (3.7,1.9) {1};
%\node at (3.5,2.2) {9/8};
\node at (3.5,2.5) {5/4};
%\node at (3.5,3.1) {3/2};
\node at (3.5,3.7) {7/4};
\draw[->] (4,0) -- (6.4,0) ;
\draw[->] (4,0) -- (4,4.4) ;
%\draw[dotted] (4,2.2) -- (5,2.2) ;
\draw[dotted] (6.0,0) -- (6.0,2.4) ;
%\draw[dotted] (4,1.3) -- (6.0,1.3) ;
%\draw[dotted] (5,0) -- (5,2.2) ;
\draw[dotted] (4,1.9) -- (6.0,1.9) ;
\draw (4,3.7) -- (6,1.9) ;
\draw[dashed] (4,2.5) to (6,1.9) ;
\node at (5.5,2.8) {$C_2$} ;
\node at (5.7,1.6) {$C_1$} ;
\end{tikzpicture}%% \vspace*{-2mm}
\caption{The costs and equilibria for three different values of $b_2$.}
\label{ALfig:ex_eq2AL}
\end{figure}
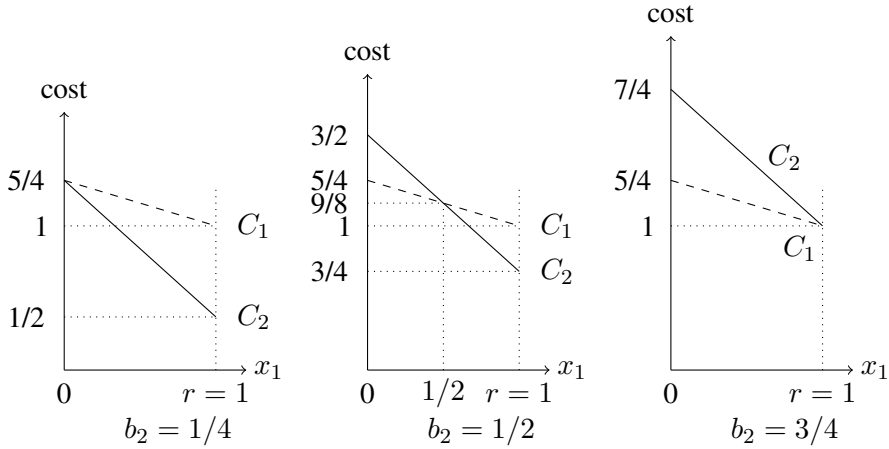

\medskip
Note that for each mass distribution $\x$, social costs ${\mathcal C}_u(\x)$ and ${\mathcal C}_e(\x)$ always decrease when we reduce $b_2$ or, more generally, vertex-cost functions; here, what increases when we reduce $b_2$ is the social cost of the unique equilibrium. Also, note that we obtained this paradox in the simple context of a symmetric affine NBG. In this case, a potential function with a unique minimum exists (so it is not a protection against these paradoxes).

\subsection{The Price of Anarchy}\label{sec:qualite:1}

The concept of a {\it price of anarchy} (PoA) is a popular measure for the inefficiency of equilibria in games. If there are multiple equilibria, we consider the worst case and quantify the cost of this worst equilibrium regarding the best configuration of the game, i.e. an optimal outcome where we do not have the constraints of equilibria.

\medskip
Let $Eq\subseteq \Delta _r(n)$ denote the set of equilibria of an NBG $G=(n,r,C)$.
We define the {\it price of anarchy} for the utilitarian social cost, $PoA_u(G)$, as follows:
\begin{equation} \label{eq:formulePOanarAL}
PoA_u (G) = \frac{\displaystyle \max_{\x \in Eq} {\mathcal C}_u(\x)}{\displaystyle \min_{\x \in \Delta_r(n)} {\mathcal C}_u(\x)},
\end{equation}
and $PoA_e(G)$ is defined likewise for the egalitarian social cost.
Unfortunately, the existence of several equilibria in NBGs leads to the price of anarchy unbounded even in elementary classes of NBGs.

\begin{proposition} \label{prop:PoAaffine}
 The price of anarchy (with both social costs) is unbounded on normal $\alpha$-uniform NBGs.
\end{proposition}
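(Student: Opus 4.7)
The plan is to exhibit, for every integer $n\geq 2$, a normal $1$-uniform NBG $G_n$ that admits two equilibria whose social costs differ by a factor of order $n$. I would take the underlying graph of $G_n$ to be the star $K_{1,n-1}$ with centre vertex $1$ and leaves $2,\dots,n$, choose any total mass $r>0$, and set $\alpha=1$, so that the cost functions read
\[ C_1(\x)=x_1+\sum_{j=2}^{n}x_j \qquad\text{and}\qquad C_j(\x)=x_j+x_1 \ \ (j\geq 2).\]

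First I would check that the concentrated distribution $\x^{c}$ defined by $x_1^{c}=r$ and $x_j^{c}=0$ for $j\geq 2$ is an equilibrium. This is immediate: the centre is the only charged vertex, with $C_1(\x^c)=r$, while every leaf has cost $C_j(\x^c)=r$ as well, so the defining inequality $x_i>0\Rightarrow C_i(\x^c)\leq C_j(\x^c)$ is satisfied (with equality). Next, I would check that the spread distribution $\x^{s}$ with $x_1^{s}=0$ and $x_j^{s}=r/(n-1)$ for each leaf $j\geq 2$ is also an equilibrium: every charged (leaf) vertex has cost $r/(n-1)$, while the (uncharged) centre has cost $C_1(\x^s)=r\geq r/(n-1)$.

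A one-line computation from the definitions of Subsection~3.7 then gives ${\mathcal C}_u(\x^c)={\mathcal C}_e(\x^c)=r$ and ${\mathcal C}_u(\x^s)={\mathcal C}_e(\x^s)=r/(n-1)$. In particular the minimum of ${\mathcal C}_u$ (and of ${\mathcal C}_e$) over $\Delta_r(n)$ is at most $r/(n-1)$, so formula~\eqref{eq:formulePOanarAL} and its analogue for the egalitarian cost yield
\[ PoA_u(G_n),\ PoA_e(G_n)\ \geq\ \frac{r}{r/(n-1)}\ =\ n-1,\]
which tends to $+\infty$ with $n$, giving the claim.

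There is no real obstacle to this plan: the verification of the two equilibrium conditions is trivial because, in a star, each leaf sees only the centre and vice versa. The conceptual point worth highlighting is that a star with coefficient $\alpha=1$ simultaneously supports a very bad equilibrium (all mass on the single, highly connected centre, where each infinitesimal player suffers from the presence of all the others) and a very good one (mass spread uniformly on the mutually non-adjacent leaves); no explicit description of the true social optimum is needed, since $\x^s$ already delivers the required upper bound on it.
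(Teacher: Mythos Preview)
Your proof is correct and takes a genuinely different route from the paper. The paper fixes the underlying graph to be a single edge ($n=2$) and lets the coefficient $\alpha>1$ grow, obtaining three equilibria of which the middle one $(\tfrac12,\tfrac12)$ has cost $\tfrac{1+\alpha}{2}$ while the extremal ones have cost~$1$; hence $PoA=\tfrac{1+\alpha}{2}\to\infty$. You instead fix the coefficient $\alpha=1$ and let the graph grow, using the star $K_{1,n-1}$ and the contrast between ``all mass on the centre'' and ``mass spread over the leaves''.

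Both arguments are elementary, but they establish slightly different refinements of the same statement. The paper's example shows that already on two vertices the PoA blows up as the interaction strength grows; your example shows that even for the fixed value $\alpha=1$ (so within the class of $1$-uniform NBGs) the PoA is unbounded as the graph varies. Your construction also has the pleasant feature that the good equilibrium $\x^s$ is supported on an independent set, making the comparison with the bad equilibrium $\x^c$ transparent without any computation of the true optimum.
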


\begin{proof}
 Consider the following example on two vertices, where %A%$0< \-\lambda < 1$ and
 $r=1$.

 \begin{center}
  \begin{tikzpicture}
   \node[draw,circle] (a) at (0,0) {$1$};
   \node[draw,circle] (b) at (4,0) {$2$} ;
   \draw (a) edge (b) ;
   \node (t) at (2,.3) {$\alpha_{1,2} = \alpha > 1$} ;
   \node (f1) at (0,-.6) {$f_{1}(x_1) = x_1$} ;
   \node (f2) at (4,-.6) {$f_2(x_2) = x_2$} ;
  \end{tikzpicture}
 \end{center}

We can write both cost functions as functions of $x_1$
and obtain
 \[C_1(x_1) = (1-\alpha) x_1 + \alpha \]
 and
 \[C_2(x_1)= (\alpha - 1) x_1 + 1\]

 There are three equilibria:
 \begin{itemize}
   \item in $x_1 = 0$ we have $C_1(0) = \alpha > 1 = C_2(0) $ ; the social cost is 1;
   \item likewise in $x_1 = 1$, the social cost is 1 also. It is easy to see that this is a global optimum for the social cost;
   \item in $x_1 = 0.5$, we have $C_1(0.5) = C_2(0.5) = \frac{1+\alpha}{2}$, this common value being the social cost.
 \end{itemize}
 Hence, we see in this example that the price of anarchy is $\frac{1+\alpha}{2}$, but we can consider $\alpha$ as large as needed.
\end{proof}

We could derive bounds on the price of anarchy in simple cases (for normal linear NBGs, for instance) based on the size of the coefficients that appear. But instead, we shall now focus on another measure, more suited to the case of NBGs.

\subsection{The Price of Stability} \label{sec:qualite:2}

The {\it price of stability} (PoS) of the game, for the utilitarian social cost, is defined as
\begin{equation} \label{eq:formulePOSAL}
PoS_u (G) = \frac{\displaystyle \min_{\x \in Eq} {\mathcal C}_u(\x)}{\displaystyle \min_{\x \in \Delta_r(n)} {\mathcal C}_u(\x)} .
\end{equation}
The price of stability of the egalitarian social cost, $PoS_e(G)$, is defined similarly, using ${\mathcal C}_e$. The so-called stability corresponds to the fact that we require the mass distribution to be an equilibrium, hence stable concerning unilateral deviation of infinitesimal players; hence the price of stability is a measure of the increase of global social cost due to this stability.

\medskip
Note that for both social costs, we have $PoA(G) \geq PoS(G)\geq 1$. It is also easily noted that

\begin{equation} \label{eq:ineg_pos}
  PoS_e(G) \leq PoS_u(G).
\end{equation}
We now give upper bounds on $PoS(G)$ in different cases.
\begin{proposition} \label{prop:generalBoundPoS}
 Let $\cal C$ be a class of vertex-cost functions and $0 < \gamma \leq \frac1{2}$ such that, for all
 $f \in \cal C$, one has
 \[ \int_0^x f(t)dt \geq \gamma x f(x).\]
 Then, the price of stability for all graphical symmetric NBGs with cost functions in $\cal C$ is at most $\frac1{\gamma}$ for both social costs.
\end{proposition}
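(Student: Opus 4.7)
The plan is to use the potential function $\Phi$ from Proposition~\ref{prop:potentialFormula} as the bridge between the optimal configuration and an equilibrium, exactly as in the classical Anshelevich-Dasgupta-Kleinberg-style argument.

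First I would write down both quantities of interest explicitly. Using symmetry ($\alpha_{i,j}=\alpha_{j,i}$), one has
\[ r\, \mathcal{C}_u(\x) \;=\; \sum_{i} x_i f_i(x_i) \;+\; 2 \sum_{i<j} \alpha_{i,j}\, x_i x_j, \]
while
\[ \Phi(\x) \;=\; \sum_{i} \int_0^{x_i} f_i(t)\,dt \;+\; \sum_{i<j} \alpha_{i,j}\, x_i x_j. \]
The goal is to sandwich $\Phi$ between two multiples of $r\,\mathcal{C}_u$. For the upper bound I would use that each $f_i$ is non-decreasing, so $\int_0^{x_i} f_i(t)\,dt \leq x_i f_i(x_i)$, and that $\sum_{i<j} \alpha_{i,j} x_i x_j \leq 2\sum_{i<j} \alpha_{i,j} x_i x_j$ (the $\alpha_{i,j}x_ix_j$ are nonnegative), yielding $\Phi(\x) \leq r\,\mathcal{C}_u(\x)$. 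For the lower bound I would plug in the hypothesis $\int_0^{x_i} f_i(t)\,dt \geq \gamma x_i f_i(x_i)$ and use $\gamma \leq 1/2$ to pass from the factor $1$ in front of $\sum_{i<j} \alpha_{i,j} x_i x_j$ to a factor $2\gamma$, producing $\Phi(\x) \geq \gamma \, r\,\mathcal{C}_u(\x)$.

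Next, since $\Phi$ is continuous on the compact simplex $\Delta_r(n)$, it admits a global minimum $\x^* \in \Delta_r(n)$, which is an equilibrium by Proposition~\ref{prop:minPhi}. Let $\x^{\mathrm{opt}}$ minimize $\mathcal{C}_u$ on $\Delta_r(n)$. Chaining the two sandwich inequalities with the minimality $\Phi(\x^*) \leq \Phi(\x^{\mathrm{opt}})$ gives
\[ \gamma\, r\, \mathcal{C}_u(\x^*) \;\leq\; \Phi(\x^*) \;\leq\; \Phi(\x^{\mathrm{opt}}) \;\leq\; r\, \mathcal{C}_u(\x^{\mathrm{opt}}), \]
hence $\mathcal{C}_u(\x^*) \leq \tfrac{1}{\gamma} \mathcal{C}_u(\x^{\mathrm{opt}})$. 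Since $\x^*$ is an equilibrium, this yields $PoS_u(G) \leq 1/\gamma$. The egalitarian bound $PoS_e(G) \leq 1/\gamma$ then follows immediately from inequality~\eqref{eq:ineg_pos}.

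There is no real obstacle here; the only subtlety is keeping track of the factor $2$ coming from the symmetry when rewriting $\mathcal{C}_u$ with sums over $i<j$, which is precisely where the assumption $\gamma \leq 1/2$ is used to ensure that the interaction term in $\Phi$ is at least $\gamma$ times its counterpart in $r\,\mathcal{C}_u$.
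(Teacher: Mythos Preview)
Your proposal is correct and follows essentially the same route as the paper: the paper isolates the sandwich $\gamma\,\mathcal{C}_u(\x)\le \Phi(\x)\le \mathcal{C}_u(\x)$ as a separate lemma (using exactly the monotonicity of $f_i$ for the upper bound and the hypothesis together with $\gamma\le 1/2$ for the lower bound), then chains through the potential minimizer and invokes~\eqref{eq:ineg_pos} for the egalitarian case. The only cosmetic difference is that you carry the factor $r$ explicitly (the paper drops it in its formula for $\mathcal{C}_u$), which cancels in the ratio anyway.
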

\noindent We prove the result for the utilitarian social cost, and the other case follows from (\ref{eq:ineg_pos}).
To prove Proposition~\ref{prop:generalBoundPoS}, we first prove the following lemma.
\begin{lemma} \label{lem:phiEncadre}
 With $\gamma$ as defined in the previous Proposition, and for all
 mass distributions $\x$ of a graphical symmetric NBG $G$ with potential function $\Phi$
 (as defined in Proposition~\ref{prop:potentialFormula}), one has
 \[ \gamma \cdot {\mathcal C}_u(\x) \leq \Phi(\x) \leq {\mathcal C}_u(\x).\]
\end{lemma}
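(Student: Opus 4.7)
The plan is to expand both $\Phi(\x)$ and $r \cdot \mathcal{C}_u(\x) = \sum_i x_i C_i(\x)$ using the formulas for cost and potential, and to compare the resulting expressions term by term. Using symmetry ($\alpha_{i,j} = \alpha_{j,i}$) I first rewrite
\[ \sum_i x_i C_i(\x) = \sum_i x_i f_i(x_i) + \sum_{i}\sum_{j \neq i} \alpha_{j,i} x_i x_j = \sum_i x_i f_i(x_i) + 2 \sum_{i<j} \alpha_{i,j} x_i x_j, \]
since each unordered pair $\{i,j\}$ contributes twice to the double sum. This shows that $\Phi$ and $\mathcal{C}_u$ decompose into the same two ingredients: a ``diagonal'' part built from the $f_i$'s, and an ``interaction'' part $T := \sum_{i<j} \alpha_{i,j} x_i x_j$, where $T$ appears with coefficient $1$ in $\Phi$ but with coefficient $2$ in $\sum_i x_i C_i(\x)$.

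For the upper bound $\Phi(\x) \leq \mathcal{C}_u(\x)$, I would exploit the fact that each $f_i$ is non-decreasing: for $t \in [0, x_i]$, $f_i(t) \leq f_i(x_i)$, so $\int_0^{x_i} f_i(t)\,dt \leq x_i f_i(x_i)$. Summing over $i$ and using $T \leq 2T$ immediately gives
\[\Phi(\x) \leq \sum_i x_i f_i(x_i) + T \leq \sum_i x_i f_i(x_i) + 2T = \sum_i x_i C_i(\x),\]
which, after dividing by $r$, yields the announced inequality.

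For the lower bound $\gamma \cdot \mathcal{C}_u(\x) \leq \Phi(\x)$, I would plug the hypothesis $\int_0^{x_i} f_i(t)\,dt \geq \gamma x_i f_i(x_i)$ into the formula for $\Phi$, obtaining $\Phi(\x) \geq \gamma \sum_i x_i f_i(x_i) + T$. The key step is then to notice that, because $\gamma \leq \tfrac12$, we have $T \geq 2\gamma T$, and hence
\[\Phi(\x) \geq \gamma \sum_i x_i f_i(x_i) + 2\gamma T = \gamma \sum_i x_i C_i(\x) = \gamma r \cdot \mathcal{C}_u(\x).\]

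There is no real obstacle; the only point requiring care is bookkeeping the factor of $2$ between the interaction term in $\Phi$ (summed over $i<j$) and the one appearing in $\sum_i x_i C_i(\x)$ (summed over $i \neq j$). This bookkeeping also explains the role of the assumption $\gamma \leq \tfrac12$ in Proposition~\ref{prop:generalBoundPoS}: it is precisely the threshold that guarantees $2\gamma T \leq T$, so that the interaction term of $\Phi$ absorbs the interaction contribution in $\gamma \mathcal{C}_u$.
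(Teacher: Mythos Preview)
Your proof is correct and follows exactly the paper's approach: expand $\mathcal{C}_u$ using symmetry into a diagonal part $\sum_i x_i f_i(x_i)$ and an interaction part $2T$, then compare termwise with $\Phi$ using monotonicity of the $f_i$'s for the upper bound and the hypothesis on $\gamma$ together with $\gamma \le \tfrac12$ for the lower bound. One cosmetic slip: ``dividing by $r$'' does not literally yield the stated inequality (only the right-hand side becomes $\mathcal{C}_u$), but the paper itself writes $\mathcal{C}_u$ without the $1/r$ normalization in its formula~\eqref{eq:formuleCu}, and the factor is irrelevant for the application since only ratios of $\mathcal{C}_u$ enter the price-of-stability bound.
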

\begin{proof}
In the case of a symmetric graphical game $(n,r,f,\alpha)$, the utilitarian social cost is equal to:

\begin{equation} \label{eq:formuleCu}
{\mathcal C}_u (\x) = \sum_i x_i f_i(x_i) + 2 \sum_{i<j} \alpha_{i,j} x_i x_j.
\end{equation}
Since vertex-cost functions $f_i$ are non-decreasing, the upper bound is clear from (\ref{eq:formulePhi}) and (\ref{eq:formuleCu}). The lower bound comes from the definition of $\gamma$ and $\gamma \leq \frac1{2}$.
\end{proof}
\noindent {\it Proof of Proposition~\ref{prop:generalBoundPoS}.} Let $\x_b$ be a best equilibrium, i.e. an equilibrium that minimizes the utilitarian social cost ${\mathcal C}_u$; let
 $\x_{\phi}$
 be a mass distribution that minimizes the potential function $\Phi$, which is an
 equilibrium
 by Proposition~\ref{prop:minPhi}, and let $\x^*$ be a mass distribution that
 minimizes ${\mathcal C}_u$, so that $PoS_u (G) = \frac{{\mathcal C}_u(\x_b)}{{\mathcal C}_u(\x^*)}$. Then we have
 \begin{align*}
  {\mathcal C}_u(\x_b) &\leq {\mathcal C}_u(\x_\phi)          \hskip 1cm \text{\small (since $\x_b$ minimizes ${\mathcal C}_u$ among equilibria)}\\
       &\leq \frac1{\gamma} \Phi(\x_\phi) \hskip 1cm \text{\small (by Lemma \ref{lem:phiEncadre})}\\
       &\leq \frac1{\gamma} \Phi(\x^*)  \hskip 1cm \text{\small (by definition of $\x_{\phi}$)}\\
       &\leq \frac1{\gamma} {\mathcal C}_u(\x^*)    \hskip 1cm \text{\small (by Lemma \ref{lem:phiEncadre}), }
 \end{align*}
 hence
 \[PoS_u (G) \leq \frac1{\gamma}.\]
\; \hfill $\QED$ %\end{proof}

Using the previous relation and Inequality~\eqref{eq:ineg_pos}, we can deduce that:
\begin{corollary} \label{AL007}
 On the class of polynomial functions with real nonnegative coefficients of degree
 $d \geq 1$,  both prices of stability  are at most $d+1$.
\end{corollary}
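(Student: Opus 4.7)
The plan is to invoke Proposition~\ref{prop:generalBoundPoS} with the class $\mathcal{C}$ of polynomials of degree $d\geq 1$ with nonnegative real coefficients, for a well-chosen $\gamma$. The only real work is to exhibit such a $\gamma$; the $\frac{1}{\gamma}$ bound on both $PoS_u$ and $PoS_e$ then follows immediately from the proposition combined with Inequality~\eqref{eq:ineg_pos}.

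First, I would check that every $f\in\mathcal{C}$ is an admissible vertex-cost function: being a polynomial with nonnegative coefficients and degree $d\geq 1$, it is continuous, has a nonnegative derivative on $[0,r]$ (hence is non-decreasing there), and its leading coefficient being positive ensures $f(t)>0$ for $t>0$. So Proposition~\ref{prop:generalBoundPoS} is applicable once we produce $\gamma$.

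Next, write $f(t)=\sum_{k=0}^{d} a_k t^k$ with $a_k\geq 0$. Then a direct termwise computation gives
\[
\int_0^x f(t)\,dt = \sum_{k=0}^{d} \frac{a_k}{k+1}\, x^{k+1}
\qquad \text{and} \qquad
x f(x) = \sum_{k=0}^{d} a_k\, x^{k+1}.
\]
Comparing coefficient by coefficient, since $\frac{1}{k+1}\geq \frac{1}{d+1}$ for every $0\leq k\leq d$ and all the $a_k x^{k+1}$ are nonnegative on $[0,r]$, we obtain
\[
\int_0^x f(t)\,dt \;\geq\; \frac{1}{d+1}\, x f(x).
\]
Thus we may take $\gamma=\frac{1}{d+1}$; the assumption $d\geq 1$ guarantees $\gamma\leq \frac12$, so this value is admissible in Proposition~\ref{prop:generalBoundPoS}.

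Applying that proposition then yields $PoS_u(G)\leq \frac{1}{\gamma}=d+1$ for every symmetric graphical NBG $G$ whose vertex-cost functions lie in $\mathcal{C}$, and the bound on $PoS_e(G)$ follows from Inequality~\eqref{eq:ineg_pos}. There is no real obstacle here: the only subtlety is noticing that the constant $\frac{1}{d+1}$ is the worst term in the coefficient-wise comparison (attained by the $t^d$ term), which is what pins down the exponent $d+1$ in the final bound.
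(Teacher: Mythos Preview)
Your proof is correct and follows exactly the approach the paper intends: the corollary is stated immediately after Proposition~\ref{prop:generalBoundPoS} with only the remark ``Using the previous relation and Inequality~\eqref{eq:ineg_pos}'', and you have simply made explicit the choice $\gamma=\tfrac{1}{d+1}$ and the termwise verification of $\int_0^x f(t)\,dt\geq \tfrac{1}{d+1}\,xf(x)$ that the paper leaves to the reader. One very minor redundancy: Proposition~\ref{prop:generalBoundPoS} already asserts the bound for \emph{both} social costs (its proof absorbs Inequality~\eqref{eq:ineg_pos}), so your separate appeal to~\eqref{eq:ineg_pos} at the end is not strictly needed.
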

\noindent For $d=0$, the price of stability is at most 1, hence equal to 1. For $d=1$ (the affine case), the price of stability is between $1$ and $2$, and we can give a family of examples reaching asymptotically the bound $2$, see Figure~\ref{fig:boundAffineReached}.

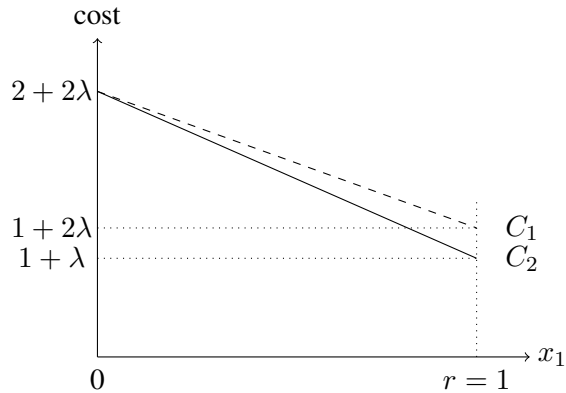
\begin{figure}[h]
\centering
\begin{tikzpicture}
\node (origine) at (0,0.2) {0};
\node at (5,0.2) {$r=1$};
\node (x_1) at (6,0.5) {$x_1$} ;
\node (cost) at (0,5) {cost} ;
\node at (-.6,4) {$\small 2+2\lambda$} ;
\node at (-.6,2.2) {$\small 1+2\lambda$} ;
\node at (-.6,1.8) {$\small 1+\lambda$} ;
\draw[->] (0,0.5) -- (5.7,0.5) ;
\draw[->] (0,0.5) -- (0,4.7) ;
\draw (0,4) -- (5,1.8) ;
\draw[dashed] (0,4) to (5,2.2) ;
\draw[dotted] (5,0.5) -- (5,2.6) ;
\draw[dotted] (0,2.2) -- (5,2.2) ;
\draw[dotted] (0,1.8) -- (5,1.8) ;
\node at (5.6,2.2) {$C_1$} ;
\node at (5.6,1.8) {$C_2$} ;
\end{tikzpicture}
\caption{A family of graphical games reaching the maximum price of stability of $2$ for
affine vertex-cost functions. With $n=2$ and $r=1$, consider $\alpha_{1,2}=\alpha_{2,1}=1$, $f_1(x_1)=1+2\lambda$ and $f_2(x_2)=(2+\lambda) x_2 + \lambda$.
This gives $C_1(\x)=2+2\lambda -x_1$ and $C_2(\x)=2+2\lambda-(1+\lambda)x_1$.
The only equilibrium of the game has cost $2+2\lambda$, while the best utilitarian social cost is obtained for $x_1=1$ and is equal to $1+2\lambda$ (this social cost is decreasing on $[0,1]$ if $0 < \lambda < 1$). Thus, we can reach the bound of $2$ for $PoS(G)$ by letting $\lambda$ go to $0$.
}
\label{fig:boundAffineReached}
\end{figure}

\medskip
We can be more specific when all vertex-cost functions are linear: the utilitarian social cost and the potential functions (as defined by Equations~\eqref{eq:formulePhi} and~\eqref{eq:formuleCu}) are related by:
\[{\mathcal C}_u(\x) = 2 \Phi (\x).\]

Observe by Proposition~\ref{prop:minPhi} that there exists an equilibrium corresponding to  the minimum of $\Phi$. Since ${\mathcal C}_u$ and $\Phi$ are minimum on the same
mass distributions and using Inequality~\eqref{eq:ineg_pos}, we get

%Hence, the price of stability is 1, by Proposition~\ref{prop:minPhi} and since ${\mathcal C}_u$ and $\Phi$ are minimum on the same
%mass distributions.

\begin{corollary}

 \label{prop:stability:linear}
  On the class of linear functions with real nonnegative coefficients, both prices of stability are   $1$.
\end{corollary}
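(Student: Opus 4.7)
The plan is to exploit the exact relation ${\mathcal C}_u(\x) = 2\Phi(\x)$ stated just before the corollary, which hands us a best-case equilibrium coinciding with a global minimizer of the social cost.

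First I would verify that relation from the definitions. For linear vertex-cost functions, $f_i(x_i) = \alpha_{i,i} x_i$, so substituting into the potential formula~\eqref{eq:formulePhi} gives
\[ \Phi(\x) = \sum_i \int_0^{x_i} \alpha_{i,i} t \, dt + \sum_{i<j} \alpha_{i,j} x_i x_j = \tfrac{1}{2} \sum_i \alpha_{i,i} x_i^2 + \sum_{i<j} \alpha_{i,j} x_i x_j, \]
while Equation~\eqref{eq:formuleCu} becomes
\[ {\mathcal C}_u(\x) = \sum_i \alpha_{i,i} x_i^2 + 2 \sum_{i<j} \alpha_{i,j} x_i x_j, \]
so indeed ${\mathcal C}_u(\x) = 2 \Phi(\x)$ on $\Delta_r(n)$, as claimed.

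Next, since ${\mathcal C}_u$ and $\Phi$ differ only by the positive multiplicative constant $2$, they attain their minimum over the compact set $\Delta_r(n)$ at exactly the same mass distributions. Let $\x_\phi$ be such a minimizer. By Proposition~\ref{prop:minPhi}, $\x_\phi$ is an equilibrium of $G$, so
\[ \min_{\x \in Eq} {\mathcal C}_u(\x) \leq {\mathcal C}_u(\x_\phi) = \min_{\x \in \Delta_r(n)} {\mathcal C}_u(\x); \]
the reverse inequality is trivial. Thus $PoS_u(G) = 1$.

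Finally, Inequality~\eqref{eq:ineg_pos} yields $PoS_e(G) \leq PoS_u(G) = 1$, and since $PoS_e(G) \geq 1$ by definition, we conclude $PoS_e(G) = 1$ as well. There is no real obstacle here: the whole argument hinges on the clean identity ${\mathcal C}_u = 2\Phi$ in the linear case, which collapses the bound $1/\gamma = 2$ from Proposition~\ref{prop:generalBoundPoS} down to $1$ by aligning the minimizer of $\Phi$ with that of ${\mathcal C}_u$.
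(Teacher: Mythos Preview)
Your proof is correct and follows essentially the same approach as the paper: establish the identity ${\mathcal C}_u = 2\Phi$ in the linear case, deduce that $\Phi$ and ${\mathcal C}_u$ share the same minimizers, invoke Proposition~\ref{prop:minPhi} to conclude $PoS_u(G)=1$, and finish with Inequality~\eqref{eq:ineg_pos} for $PoS_e$. You simply supply the explicit verification of ${\mathcal C}_u = 2\Phi$ that the paper asserts without computation.
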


\section{The $\alpha$-uniform graphical NBG for some graphs} \label{sec:graph}
This section considers that the underlying undirected graph can be path, cycle and complete bipartite graph.  From the example of the path, we derive the embryo of an algorithm outputting an equilibrium, which could be adapted to general graphs. Using the linearity of the cost functions, without loss of generality, we can assume that the total mass is equal to~$1$ ($r=1$).

\medskip
We can observe that, in a $\alpha$-uniform NBG,

\begin{rules} \label{rule:1}
An equilibrium  cannot have an uncharged vertex with only uncharged neighbours.
\end{rules}

Indeed, the cost of an uncharged vertex with uncharged neighbours is zero, and the mass distribution cannot be an equilibrium because the costs of vertices are strictly positive.

\begin{rules} \label{rule:2} An equilibrium cannot have one uncharged vertex with only one charged neighbour, unless $\alpha \geq 1$.
\end{rules}

Indeed, the cost of an uncharged vertex with only one charged neighbour is less than that of the charged neighbour itself when  $\alpha < 1$, and the mass distribution cannot be an equilibrium by definition.

\begin{rules} \label{rule:3} If  an equilibrium $\x$  has an uncharged vertex $i$ with $k$ charged neighbours,  then $\alpha \geq \frac{1}{k}$. If $\alpha = \frac{1}{k}$, then all the neighbours of~$i$ have the same mass, and none of them has charged neighbours.
\end{rules}

Indeed, suppose that vertex $i$ is not charged in $\x$. The definition of equilibrium in $\alpha$-uniform   NBGs implies these inequalities: for any  neighbour $j$ of~$i$, $C_{j}(\x)\geq x_{j}$,  and $C_i(\x)=\alpha \sum_{\ell \in N[i]} x_{\ell} \geq C_{j}(\x)$ . These inequalities are valid when $\alpha \geq \frac{1}{k}$. %We can extend this argument to arbitrary graphs. More generally, in a graph $G$, if an uncharged vertex has $k$ charged neighbours, then $\alpha \geq \frac{1}{k}$.
If $\alpha = \frac{1}{k}$, then $C_i(\x)=\frac{1}{k} \sum_{\ell \in N[i]} x_{\ell}\geq x_j$ for all~$j\in N[i]$, i.e., the average of the $k$~neighbours masses is at least the mass of each neighbour, so all masses are equal. Moreover, $C_i(\x)=C_j(\x)$ for all~$j$, so no neighbour of~$i$ can receive an additional cost from a neighbour.

\begin{rules} \label{rule:4} Let $u$ and $v$ be two vertices having the same neighbourhood.  If  a mass distribution $\x$ is an equilibrium,  then $x_u = x_v$.
\end{rules}

Indeed, if we let $N=\alpha \sum_{\ell \in N[u]} x_{\ell}=\alpha \sum_{\ell \in N[v]} x_{\ell}$, then the definition of equilibrium in $\alpha$-uniform   NBGs implies the equalities  $C_{u}(\x)= x_u + N$ and   $C_{v}(\x)= x_v + N$. Then $x_u \neq x_v$, say $x_u>x_v$,  implies that $C_{u}(\x)>C_{v}(\x)$, so $x_u$ is uncharged,  contradicting $x_u>x_v$.

\subsection{The case when  the underlying  graph is a path}
%\antoine{{\bf Je reprendrai tout ça avec mes notes}}
In this subsection, we consider the $\alpha$-uniform case for different values of
$\alpha$, in a quite simple graph: the path with $n$ vertices, $P_n$.

Paths are seemingly harder to deal with than cycles; even in the uniform case, they provide different interesting cases: for instance, for given $n$ and $\alpha$, one may have infinitely many equilibria. In the following, we do not necessarily provide the complete computations.

\medskip
First, we can look for equilibria $\x=(x_1, \ldots, x_n)$ where all vertices are charged or, if uncharged, have the same cost as charged vertices. We shall call such equilibria {\it uniform-cost} or {\it simply uniform}. Hence: \[\forall i,j \in V, \quad C_i(\x) = C_j(\x).\]
Let $c$ denote this common cost.
To calculate this type of equilibrium, we shall consider it as a solution to the system of equations~\eqref{ALeqAL1} below. This allows having a system of equalities rather than
inequalities; on the other hand, there are more unknowns.  The first three equations represent the fact that the costs of the vertices are identical: the first and the third correspond to the vertices~$1$ and~$n$. The fourth one corresponds to the constraint that the total mass equals~$1$.

\medskip
We then have:
\begin{equation}
\begin{aligned} \label{ALeqAL1}
 x_1 + \alpha x_2 &= c \\
 \forall \, 1 < i < n, \; \; \alpha x_{i-1} + x_i + \alpha x_{i+1} &= c \\
 \alpha x_{n-1} + x_n &=c \\
 \sum_{i\in [n]} x_i &= 1 \\
 \forall \, 1 \leq i \leq n, \; \; x_i &\geq 0.
\end{aligned}
\end{equation}

If we denote   $M_{n,\alpha}$ the $(n+1)\times (n+1)$ matrix
\[
 \begin{pmatrix}
  1 & \alpha &  & & & & & -1 \\
  \alpha & 1 & \alpha & & & & & -1 \\
  & \ddots & \ddots & \ddots & & & & -1\\
  & & \alpha & 1 & \alpha & & & -1\\
  & & & \ddots & \ddots & \ddots & & -1\\
  & & & & & \alpha & 1 & -1 \\
  1 & 1 & \cdots & 1 & 1 & 1 & 1 & 0 \\
 \end{pmatrix},
\]
where all blank entries are $0$,  a solution of System~\eqref{ALeqAL1} above gives in
particular a solution to the linear system
\begin{equation} \label{eq:matrixsystem}
 M_{n,\alpha} \cdot (x,c) = \begin{pmatrix} 0 \\ 0 \\ \vdots \\ 0 \\ 1 \\ \end{pmatrix},
\end{equation}
which will help us to find an equilibrium in $P_n$ (we know that there exists at least one). The same is true for cycles, with only a slight modification of the matrix (see Section~\ref{sec:cycle}).

\medskip
The solutions corresponding to uniform-cost equilibria depend on the determinant of  matrix $M_{n,\alpha}$ and thus also on the value of $\alpha$.

\paragraph{Case when  $n=2$.} We have $det(M_{2,\alpha}) = - 2 \alpha +2.$

For $\alpha = 1$, we have $det(M_{2,\alpha}) = 0$ and any mass distribution is a uniform-cost equilibrium. For $\alpha \neq 1$, we have at most one uniform-cost equilibrium. For
reasons of symmetry this equilibrium is $(\frac1{2},\frac1{2})$.

 \paragraph{Case when  $n=3$.} By computation, we get  $det(M_{3,\alpha}) = -4\alpha +3.$
Hence, using the same argument as previously, there is at most one uniform-cost equilibrium for $\alpha \neq \frac{3}{4}$.
By computation, this equilibrium would be
\[\x_0^* = \left(\frac{\alpha-1}{4\alpha-3}, \frac{2\alpha-1}{4\alpha-3}, \frac{\alpha-1}{4\alpha-3}\right).\]
For this to be an equilibrium, its entries must be nonnegative,
which is the case if and only if $0 \leq \alpha \leq \frac{1}{2}$ or $\alpha \geq 1$.

For $\alpha = \frac3{4}$, System~\eqref{eq:matrixsystem} has no solution. Hence, there is no uniform-cost equilibrium.
We can look for non-uniform equilibria (with at least one uncharged vertex). Two cases are possible:
\begin{itemize}
\item if the middle vertex $2$ is uncharged, then the only equilibrium
 must be $\x_1^*=(\frac1{2},0,\frac1{2})$, and this is a valid equilibrium only if
 $\alpha \geq \frac1{2}$.
\item there is no equilibrium where vertex $1$ or $3$ is uncharged, say $x_1=0$, $x_2>0$, $x_3>0$, since the cost in vertex $1$ would be less than the cost in vertex $3$. The equilibrium $\x_2^*=(0,1,0)$ is possible only if $\alpha \geq 1$.
\end{itemize}
So, to sum up, there are three equilibria, $\x_0^*$, $\x_1^*$ and $\x_2^*$  if $\alpha > 1$. Otherwise, if $\alpha =1$, there are two equilibria, $\x_2^*\;(=\x_0^*)$ and $\x_1^*$. If $1> \alpha >\frac1{2}$, then $\x_1^*$ is the only equilibrium. If $\alpha =\frac1{2}$, the only equilibrium is $\x_1^*\;(=\x_0^*)$. And if $\frac1{2}> \alpha \geq 0$, then $\x_0^*$ is the only equilibrium.
\paragraph{ Case when $n=4$:} We have $det(M_{4,\alpha}) = 2(\alpha^2+\alpha-1)(\alpha-2)$,
whose nonnegative roots are $2$ and $\phi=\frac{\sqrt{5}-1}2 \approx 0.618$.

\medskip
Hence, if $\alpha \neq 2$ and $\alpha \neq \phi$, we have at most one uniform-cost equilibrium. The solution
to System~\eqref{eq:matrixsystem} is then $x_1=x_4=\frac{1}{4-2\alpha}$, $x_2=x_3=\frac{1-\alpha}{4-2\alpha}$, which yields nonnegative values, i.e., a uniform equilibrium when $0\leq \alpha \leq 1$. As mentioned above, a search for non-uniform equilibria can be easily done, but we prefer to focus on the case $\alpha =\phi$ because it leads to infinitely many equilibria. First, any quadruple $(x_1,x_2,x_3,x_4)$ such that
\begin{equation} \label{n=4a} x_1+x_4=\frac{1}{2-\phi}\approx 0.724,
\end{equation}
\begin{equation} \label{n=4b} x_2=\frac{-\phi}{2-\phi}+x_4(1+\phi)\approx -0.447 +1.618\,x_4,
\end{equation}
\begin{equation} \label{n=4c} x_3=\frac{1}{2-\phi}-x_4(1+\phi)\approx 0.724 -1.618\,x_4
\end{equation}
is a solution to  System~(\ref{eq:matrixsystem}). The nonnegativity condition then implies that $x_4 \in \;[\frac{\phi}{(\phi +1)(2-\phi)},$ $\frac{1}{(\phi +1)(2-\phi)}]$, i.e. $\approx 0.276 \leq x_4 \leq \approx 0.447$; then any triple $(x_1,x_2,x_3)$ verifying (\ref{n=4a})--(\ref{n=4c}) gives, together with $x_4$, a uniform equilibrium. All these solutions have the same cost, namely $\frac{1-\phi^2}{2-\phi}\approx 0.447$.
%\david{donner les détails ? y'a des gros nombres compliqués} %\antoine{{\bf oui, mais c'est rigolo (infinité de solutions)}}

\paragraph{ Case when $n=5$:}  We obtain
$det(M_{5,\alpha}) =(\alpha +1)(\alpha -1)(\alpha^2+8\alpha -5),$
whose nonnegative roots are $1$ and $\phi=\sqrt{21}-4 \approx 0.583$.

\medskip
Calculations show that there is no solution to  System~\eqref{eq:matrixsystem} when $\alpha =\phi$, implying that there is no uniform equilibrium. More calculations lead to the following results for $\alpha \in [0,1]$:
\begin{itemize}
\item if $0\leq \alpha \leq \frac{1}{2}$, there is a unique solution: a uniform equilibrium.
\item if $\frac{1}{2} \leq \alpha < 1$, then $\x_0^*=(\frac{1}{3},0,\frac{1}{3},0,\frac{1}{3})$ is the only equilibrium, and it is non-uniform unless $\alpha = \frac{1}{2}$.
\item if $\alpha =1$, $\x_0^*$ is an equilibrium, and any 5-tuple $(x_1,x_2,0,x_4,x_5)$ such that $x_1+x_2=x_4+x_5=\frac{1}{2}$ and $x_2+x_4\geq \frac{1}{2 }$ (e.g., $(\frac{1}{8},\frac{3}{8},0,\frac{1}{4},\frac{1}{4})$) is a non-uniform equilibrium, except the distributions $(x_1,x_2,0,x_1,x_2)$, which are uniform, with cost $x_1+x_2=\frac{1}{2}$.
\end{itemize}

%\david{j'ai perdu les détails, tu les as ?}
%\antoine{{\bf Oui - càd, au LRI}}an be found.

\paragraph{General case:}
\noindent We observe that the four previous determinants,  $det(M_{2,\alpha})$, $det(M_{3,\alpha})$, $det(M_{4,\alpha})$, $det(M_{5,\alpha})$, show no regularity, and it looks hard to find a general form for them, for their roots, and for the solutions to System~\eqref{eq:matrixsystem}. %We can, however, remark that the following rules, also valid for cycles, must be verified. %In an equilibrium with uncharged vertices,

\medskip
The first three Rules given above have the following implications when the graph is a path.

\medskip
Rule~\ref{rule:1} implies that an equilibrium  cannot have three or more consecutive uncharged vertices. %Moreover, when the underlying graph is a path,   we obtain a more constrained rule on uncharged vertices.
Rule~\ref{rule:2} implies that an equilibrium cannot have two consecutive uncharged vertices unless $\alpha \geq 1$.
Rule~\ref{rule:3} implies that no equilibrium has  uncharged vertices when $0\leq \alpha < \frac{1}{2}$.

\medskip
Because  such a game has at least one equilibrium by Theorem~\ref{th:existence:g},  we have that
\begin{itemize}
\itemsep=0.95pt
\item either $det(M_{n,\alpha})\neq 0$: the unique solution to System~\eqref{eq:matrixsystem} is nonnegative;
\item or $det(M_{n,\alpha})=0$: among the solutions to System~\eqref{eq:matrixsystem}, at least one is nonnegative. % (use $M^*_{n,\alpha}$ and System (\ref{eq:matrixsystem}$^*)$ for cycles).
\end{itemize}

\smallskip
From these three Rules and the above results on small values of $n$, we can conjecture that:
\begin{conj} \label{conjalph}
For $n\geq 2$ and $\alpha < \frac{1}{2}$, we have $det(M_{n,\alpha})\neq 0$; therefore, there is a unique equilibrium (which has a uniform cost) in $P_n$.
 \end{conj}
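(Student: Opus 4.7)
The plan is to combine the three structural rules with a short linear-algebra argument on the matrix $M_{n,\alpha}$. I would first show that under the hypothesis $\alpha<1/2$, every equilibrium must be fully charged, and therefore uniform-cost. In $P_n$, each vertex has at most two neighbours, so an uncharged vertex has $k\in\{0,1,2\}$ charged neighbours. Rule~\ref{rule:1} rules out $k=0$, and Rule~\ref{rule:3} (which subsumes Rule~\ref{rule:2}) requires $\alpha\geq 1/k$ in the remaining cases, forcing $\alpha\geq 1/2$. None of these is compatible with $\alpha<1/2$, so every vertex must be charged; the equilibrium definition then forces all costs to coincide, so every equilibrium solves System~(\ref{ALeqAL1}) and, in particular, the linear system~(\ref{eq:matrixsystem}).

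The second step is to prove $\det(M_{n,\alpha})\neq 0$ by exploiting the block form
\[
M_{n,\alpha}=\begin{pmatrix}T_n(\alpha) & -\mathbf{1} \\ \mathbf{1}^{\top} & 0\end{pmatrix},
\]
where $T_n(\alpha)=I+\alpha A$ and $A$ is the adjacency matrix of $P_n$. The eigenvalues of $A$ are $2\cos\!\bigl(k\pi/(n+1)\bigr)$ for $1\leq k\leq n$, all lying strictly in $(-2,2)$, so the eigenvalues of $T_n(\alpha)$ lie in $(1-2\alpha,\,1+2\alpha)$ and are strictly positive for $\alpha<1/2$. Hence $T_n(\alpha)$ is symmetric positive definite, and in particular invertible. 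The Schur-complement formula then yields
\[
\det(M_{n,\alpha})=\det\bigl(T_n(\alpha)\bigr)\cdot \mathbf{1}^{\top}T_n(\alpha)^{-1}\mathbf{1}.
\]
The first factor is positive; setting $\mathbf{y}=T_n(\alpha)^{-1}\mathbf{1}\neq 0$, the second factor equals $\mathbf{y}^{\top}T_n(\alpha)\mathbf{y}>0$ by positive-definiteness. Hence $\det(M_{n,\alpha})>0$, so System~(\ref{eq:matrixsystem}) admits a unique solution.

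Combining the two steps gives the conjecture: Theorem~\ref{th:existence:g} guarantees that at least one equilibrium exists; by Step~1 it corresponds to a nonnegative solution of~(\ref{eq:matrixsystem}); by Step~2 that solution is unique. The uniform-cost property is automatic from the construction of~(\ref{ALeqAL1}).

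The main obstacle is Step~2: as the authors note, $\det(M_{n,\alpha})$ displays no evident regularity for $n=2,3,4,5$, so a direct induction on $n$ or a closed form seems delicate. The spectral route sketched above sidesteps this by reducing invertibility of $M_{n,\alpha}$ to the well-known spectrum of the path graph. A more elementary alternative — expanding $\det(M_{n,\alpha})$ along its last row and column and using the standard three-term recurrence for tridiagonal determinants — is available in principle but is noticeably heavier and forfeits the conceptual clarity provided by the positive-definiteness of $I+\alpha A$.
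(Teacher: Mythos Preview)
The paper does not contain a proof of this statement: it is explicitly stated as a conjecture (Conjecture~\ref{conjalph}), motivated by the small cases $n=2,3,4,5$ and by the observation, already derived from Rule~\ref{rule:3} just above, that for $0\le\alpha<\tfrac12$ no equilibrium has uncharged vertices. There is therefore no proof in the paper to compare against.

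That said, your argument is correct and in fact resolves the conjecture. Your Step~1 coincides with what the paper already establishes via Rules~\ref{rule:1}--\ref{rule:3}. Your Step~2 is the genuinely new ingredient the authors were missing: writing $M_{n,\alpha}$ in block form with top-left block $T_n(\alpha)=I+\alpha A$, using the classical spectrum $2\cos\!\bigl(k\pi/(n+1)\bigr)\in(-2,2)$ of the path adjacency matrix to see that $T_n(\alpha)$ is symmetric positive definite whenever $0\le\alpha<\tfrac12$, and then applying the Schur-complement identity
\[
\det(M_{n,\alpha})=\det\bigl(T_n(\alpha)\bigr)\cdot\mathbf{1}^{\top}T_n(\alpha)^{-1}\mathbf{1}.
\]
Both factors are strictly positive (the second because $T_n(\alpha)^{-1}$ is itself positive definite and $\mathbf{1}\neq 0$), so you actually obtain $\det(M_{n,\alpha})>0$, stronger than the conjectured $\neq 0$; this is consistent with the explicit values the paper computes for $n\le 5$. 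Combining with Theorem~\ref{th:existence:g} and Step~1 yields uniqueness exactly as you describe. The authors' remark that the determinants ``show no regularity'' concerns closed-form expressions; your spectral route bypasses that difficulty entirely, and the same argument, with the cycle spectrum $2\cos(2k\pi/n)$, would equally settle Conjecture~\ref{conjalph-1}.
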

 \begin{example} In path $P_6$ with $\alpha =\frac{1}{4}$, the unique equilibrium is $\frac{1}{76}(15,11,12,12,11,15)$, with cost $\frac{71}{304}\approx .23$; for $\alpha =\frac{1}{3}$, it is $\frac{1}{38}(8,5,6,6,5,8)$, with cost $\frac{29}{114}\approx .25$. In $P_7$ with $\alpha =\frac{1}{4}$, it is $\frac{1}{240}(41,30,33,32,33,30,41)$, with cost $\frac{97}{480}\approx .20$; and for $\alpha =\frac{1}{3}$, it is $\frac{1}{71}(13,8,10,9,10,8,13)$, with cost $\frac{47}{213}\approx .22$.

% For $C_n$, the unique solution would obviously be $\frac{1}{n}(1,1, \ldots ,1)$.
 \end{example}

 \medskip

\noindent The above rules lead to
a starting point for an algorithm finding an equilibrium in a path with $n$ vertices, which, however, has too high a worst-case complexity due to Step $2$:
\begin{itemize}
\item Step $1$: compute $det(M_{n,\alpha})$.

(a) if $det(M_{n,\alpha})\neq 0$, there is a unique solution to System~\eqref{eq:matrixsystem}.
 \begin{itemize}
 \item if the solution is nonnegative, we have a uniform equilibrium. We know that this is always the case if $\alpha < \frac{1}{2}$.
 \item if the solution is negative,  we must search for non-uniform equilibria (see Step 2).
 \end{itemize}
 (b) if $det(M_{n,\alpha})= 0$, either there is no solution to System~\eqref{eq:matrixsystem}, or infinitely many. In the latter case, if some of these solutions are nonnegative,  we have obtained uniform equilibria (see the example when $n=4$). Otherwise, we are back to the search for non-uniform equilibria.
\end{itemize}

\begin{itemize}
\item Step $2$: search for a non-uniform equilibrium: try with uncharged vertices, respecting Rules~\ref{rule:1} and~\ref{rule:2} above (Rule~\ref{rule:3} is respected because we are in a case when no uniform equilibrium has been found).
 \end{itemize}
 This sketch of an algorithm can be adapted to general graphs, with modification of the determinant.  Note that uncharged vertices mean fewer unknowns but induce inequalities on the costs instead of equalities.

 \medskip
 Now, we restrict ourselves to the cases $\alpha = \frac{1}{2}$ and $\alpha =1$, the study of general $\alpha$ being seemingly out of reach. We recall Conjecture \ref{conjalph} when $\alpha <\frac{1}{2}$.

\paragraph{General case for $\alpha = \frac{1}{2}$.}

Both ends of the path must be charged. Rule~\ref{rule:3} above shows that if there is one uncharged vertex $x_i=0$, then $x_{i-1}=x_{i+1}$, and either the vertex $i-1$ or $i+1$ is an end, or $x_{i-2}=x_{i+2}=0$; similarly, $x_{i-3}=x_{i+3}=x_{i-1}$, and so on. There are two cases.

If $n$ is odd, then $(\frac{2}{n+1},0,\frac{2}{n+1},0,\frac{2}{n+1}, \ldots , 0,\frac{2}{n+1})$ is the only equilibrium, with uniform cost equal to $\frac{2}{n+1}$.

\medskip
If $n$ is even, $n=2q$, then all vertices are charged, and the only equilibrium is given by
\begin{align*}
 x_{2k} = kx_2, \; 1\leq k \leq q\\
 x_{2k+1} = x_1 - kx_2, \; 0\leq k \leq q-1\\
 x_1=\frac{1}{q+1}\\
 x_2= \frac{1}{q(q+1)},
\end{align*}
with uniform cost equal to $\frac{2q+1}{2q(q+1)}$. For instance, the equilibrium for $n=10$ is $$(\frac{1}{6},\frac{1}{30},\frac{2}{15},\frac{1}{15},\frac{1}{10},\frac{1}{10},\frac{1}{15},\frac{2}{15},\frac{1}{30},\frac{1}{6})=\frac{1}{30}(5,1,4,2,3,3,2,4,1,5);$$ its uniform cost is $\frac{11}{60}$. Observe that the odd positions are $5,4,3,2,1$ and the even positions are $1,2,3,4,5$.

\paragraph{General case for $\alpha = 1$.}
\begin{remark} \label{rmkalph}
Let $\x= (x_1,\dots, x_n)$ be an equilibrium.  When $\alpha =1$, a simple observation is that (in paths as well as in cycles) if $x_i=0$, $x_{i+1}>0$ and $x_{i+2}>0$, then $x_{i+3}=0$.
\end{remark}
The search for uniform-cost solutions is straightforward (no need to compute the determinant) and divides into three cases.
\begin{itemize}
    \item If $n=3k+2$, there is a simple infinity of solutions $(x_1,x_2,0,x_1,x_2,0, \ldots , x_1,x_2,0,x_1,x_2)$, with $x_1\geq 0$, $x_2\geq 0$, and $x_1+x_2=\frac{3}{n+1}$. This includes $(\frac{3}{n+1},0,0, \frac{3}{n+1},0,0,\ldots , \frac{3}{n+1},0)$ and its symmetric $(0,\frac{3}{n+1},0,0,\frac{3}{n+1},0, \ldots ,0,\frac{3}{n+1})$. The cost for all vertices is $x_1+x_2=\frac{3}{n+1}$. We can see that this means that $\alpha -1$ divides $det(M_{n,\alpha})$.
    \item When $n=3k+1$ or $n=3k$, there is a unique solution, respectively \\
     $(\frac{3}{n+2},0,0,\frac{3}{n+2},0,0, \ldots , \frac{3}{n+2})$ or $(0,\frac{3}{n},0,0,\frac{3}{n},0, \ldots ,0,\frac{3}{n},0)$, which means that in both cases $det(M_{n,1})\neq 0$.
\end{itemize}

Still for $\alpha =1$, the study of non-uniform-cost equilibria is similar to that for cycles. See below.

\subsection{The case  when  the underlying graph is a cycle} \label{sec:cycle}

We apply here the same argument  as for paths: we build a new matrix

\[ M^*_{n,\alpha} =\begin{pmatrix}
  1 & \alpha &  & & & & \alpha & -1 \\
  \alpha & 1 & \alpha & & & & & -1 \\
  & \ddots & \ddots & \ddots & & & & -1\\
  & & \alpha & 1 & \alpha & & & -1\\
  & & & \ddots & \ddots & \ddots & & -1\\
  \alpha & & & & & \alpha & 1 & -1 \\
  1 & 1 & \cdots & 1 & 1 & 1 & 1 & 0 \\
 \end{pmatrix}
\]
and its corresponding linear system (\ref{eq:matrixsystem}$^*).$

\medskip
Cycles and their determinants are more regular than paths, but no general pattern has been found for their determinants, roots, or solutions to (\ref{eq:matrixsystem}$^*)$.

\medskip
Rules~\ref{rule:1},~\ref{rule:2} and~\ref{rule:3}, as well as the sketch of the algorithm above, can be adapted to cycles.   Conjecture~\ref{conjalph}  becomes:

\begin{conj} \label{conjalph-1}
For $n\geq 2$ and $\alpha < \frac{1}{2}$, we have $det(M^*_{n,\alpha})\neq 0$; therefore there is a unique equilibrium (which has a uniform cost) in $C_n$.
 \end{conj}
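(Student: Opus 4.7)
The plan is to combine the three structural rules on equilibria with a spectral computation of $\det(M^*_{n,\alpha})$, so that both halves of the statement reduce to the easily handled case of a fully charged configuration.

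First I would show that for $\alpha<\frac{1}{2}$ no equilibrium on $C_n$ has an uncharged vertex. Every vertex of $C_n$ has degree exactly $2$, so an uncharged vertex $i$ has either $0$, $1$, or $2$ charged neighbours. Rule~\ref{rule:1} rules out the first case, Rule~\ref{rule:2} requires $\alpha\geq 1$ in the second, and Rule~\ref{rule:3} requires $\alpha\geq\frac{1}{2}$ in the third. All three are excluded by the hypothesis $\alpha<\frac{1}{2}$, so every vertex must be charged. Consequently, by the very definition of an equilibrium, every vertex has the same cost, and the equilibrium must solve System~(\ref{eq:matrixsystem}$^*$).

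Next I would compute $\det(M^*_{n,\alpha})$ via a Schur complement. Write
\[ M^*_{n,\alpha}=\begin{pmatrix} B & -\mathbf{1} \\ \mathbf{1}^{T} & 0 \end{pmatrix}, \qquad B := I+\alpha A,\]
where $A$ is the adjacency matrix of $C_n$. The spectrum of $A$ on $C_n$ is $\{2\cos(2\pi k/n) : k=0,\dots,n-1\}\subseteq[-2,2]$, so the eigenvalues of $B$ lie in $[1-2\alpha,1+2\alpha]$ and are strictly positive for $\alpha<\frac{1}{2}$. Hence $B$ is positive definite, in particular invertible, and the Schur complement formula yields
\[ \det(M^*_{n,\alpha}) = \det(B)\cdot \mathbf{1}^{T} B^{-1}\mathbf{1}.\]
Because $C_n$ is $2$-regular, $A\mathbf{1}=2\mathbf{1}$, so $\mathbf{1}$ is an eigenvector of $B$ with eigenvalue $1+2\alpha$; therefore $B^{-1}\mathbf{1}=\frac{1}{1+2\alpha}\mathbf{1}$ and $\mathbf{1}^{T}B^{-1}\mathbf{1}=\frac{n}{1+2\alpha}>0$. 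It follows that $\det(M^*_{n,\alpha})=\frac{n\,\det(B)}{1+2\alpha}>0$, proving the first half.

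Finally, since $\det(M^*_{n,\alpha})\neq 0$, System~(\ref{eq:matrixsystem}$^*$) has a unique solution. Theorem~\ref{th:existence:g} guarantees that an equilibrium exists, and by the first step every equilibrium satisfies this system; hence the unique solution is nonnegative and is the unique equilibrium, automatically of uniform cost. (Solving explicitly gives the symmetric $x_i=1/n$ with common cost $(1+2\alpha)/n$, as expected from the rotational symmetry of $C_n$.) The only delicate point is really the observation that $\mathbf{1}$ diagonalises $B$ because $C_n$ is regular — this is what makes the cycle case strictly cleaner than the path case and lets us avoid the case-analysis that plagued Section~\ref{sec:graph} for small $n$.
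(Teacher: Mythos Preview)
Your proof is correct, but there is nothing in the paper to compare it against: the statement is labelled a \emph{conjecture} and is genuinely left open by the authors, who write that for cycles ``no general pattern has been found for their determinants, roots, or solutions.'' You have therefore settled an open question in the paper, not reproduced an existing argument.

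The two ingredients you combine are exactly the right ones. The combinatorial step (Rules~\ref{rule:1}--\ref{rule:3} force every vertex of $C_n$ to be charged once $\alpha<\tfrac12$, because every degree equals $2$) reduces any equilibrium to a solution of System~(\ref{eq:matrixsystem}$^*$). The spectral step (the eigenvalues of $B=I+\alpha A$ lie in $[1-2\alpha,1+2\alpha]\subset(0,\infty)$) makes $B$ positive definite, after which the Schur complement identity $\det(M^*_{n,\alpha})=\det(B)\cdot\mathbf{1}^{\top}B^{-1}\mathbf{1}$ is automatically positive. Your observation that $\mathbf{1}$ is an eigenvector of $B$ by regularity is pleasant but not strictly needed for the sign: positive-definiteness of $B^{-1}$ already gives $\mathbf{1}^{\top}B^{-1}\mathbf{1}>0$ without computing it. It is worth noting that the adjacency spectrum of $P_n$ also lies in $(-2,2)$, so the identical Schur/positive-definiteness argument shows $\det(M_{n,\alpha})>0$ for $\alpha<\tfrac12$, and Rules~\ref{rule:1}--\ref{rule:3} force full support on paths just as on cycles; hence your method resolves Conjecture~\ref{conjalph} for paths at the same time --- the regularity of $C_n$ is not, contrary to your closing remark, what makes the argument go through.
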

Unlike what happens for paths, we can observe that in $C_n$, the mass distribution described by $x_i=\frac{1}{n}$, $1\leq i \leq n$, is always a uniform-cost equilibrium. We sketch some results for $\alpha =\frac{1}{2}$ and $\alpha =1$.

\paragraph{General case for $\alpha =\frac{1}{2}$.}

There is a simple infinity of equilibria with uniform cost when $n$ is even, given by $(x_1,x_2, \ldots ,x_1,x_2)$, with $x_1\geq 0$, $x_2\geq 0$, $x_1+x_2=\frac{2}{n}$. This includes $(0,\frac{2}{n}, \ldots ,0,\frac{2}{n})$ and $(\frac{1}{n}, \ldots ,\frac{1}{n})$. The cost of each vertex is $x_1+x_2=\frac{2}{n}$. When $n$ is odd, the unique solution is with $x_i=\frac{1}{n}$.

\paragraph{General case for $\alpha = 1$.}

As for paths, it is straightforward to see that:

\smallskip
-- If $n=3k$, there is a double infinity of solutions, $(x_1,x_2,x_3, \ldots ,x_1,x_2,x_3)$, with $x_1 \geq 0$, $x_2 \geq 0$, $x_3 \geq 0$ and $x_1+x_2+x_3=\frac{3}{n}$. This means that $(\alpha -1)^2$ divides $det(M^*_{n,\alpha})$.

\smallskip
-- If $n=3k+1$ or $3k+2$, the only solution is with $x_i=\frac{1}{n}$.

\medskip

\noindent Using Remark \ref{rmkalph}, one can see that non-uniform solutions are combinations of configurations of the type
$$(\ldots 0,x,0,0,x,0,x,0,ax,(1-a)x,0,bx,(1-b)x,0, \ldots),$$
with $0<a<1$, $a\leq b <1$, and the sum of charges equal to $1$.

\subsection{The case  when  the underlying graph is a  complete bipartite graph}
%\antoine{A FAIRE d'après mes notes}
We consider the complete bipartite graph $K_{p,q}$ with $p+q=n$, $p\geq q$, and a mass distribution $\x=(x_1, \ldots ,x_{p}, x_{p+1}, \dots , x_{p+q})$ with $r=1$.

\medskip
Using Rule~\ref{rule:4} given above, we have $x_i=x_j$ for $1\leq i<j\leq p$ and $x_k=x_{\ell}$ for $p+1\leq k<\ell \leq p+q$. Let $a=x_1$ and $b=x_{p+1}$.

\medskip
\noindent We start with the search for equilibria with uncharged vertices. We may assume that either $a=0$ (and $b=\frac{1}{q}$) or $b=0$ (and $a=\frac{1}{p}$).
We can apply Rule~\ref{rule:3} to determine  when the equilibrium exists according to the value $\alpha$.
In the former case,  the equilibrium with $a=0$ exists when $\alpha \geq \frac{1}{q}$. Similarly in the latter case,
 the equilibrium with $b=0$ exists when $\alpha \geq \frac{1}{p}$.

\medskip
 We see that if none of these two conditions on $\alpha$ is fulfilled, i.e., $\alpha < \frac{1}{p}$, then, since an equilibrium always exists, we must have a uniform equilibrium with no uncharged vertices.

\medskip
\noindent Now we search for a uniform equilibrium with no uncharged vertices. By definition, we have $a+\alpha qb=b+\alpha pa$, i.e., $a(1-\alpha p)=b(1-\alpha q)$, and $pa+qb=r=1$. Since $a\neq 0$ and $b\neq 0$, we can consider two cases: (i) $1-\alpha p=1-\alpha q=0$, (ii) both $1-\alpha p$ and $1-\alpha q$ are nonzero.

\medskip
(i) If $1-\alpha p=1-\alpha q=0$, then $p=q$ and the only condition remaining is $a+b=\frac{1}{p}\,$: there is an infinity of equilibria, given by $(a,b)=(a,\frac{1}{p}-a)$, $0\leq a\leq \frac{1}{p}$, and the common cost is $\frac{1}{p}$.

\smallskip
(ii) If both $1-\alpha p$ and $1-\alpha q$ are nonzero, then $a=b\frac{1-\alpha q}{1-\alpha p}$.

\medskip
\indent \indent (1) $\frac{1}{p}<\alpha <\frac{1}{q}$. Then $a$ and $b$ have different signs, and no uniform equilibrium with no uncharged vertices exists.

\medskip
\indent \indent (2) $\alpha < \frac{1}{p}$ or $\alpha > \frac{1}{q}$. If $p+q-2\alpha pq$ has the same sign as $1-\alpha p$ and $1-\alpha q$, then
$$a=\frac{1-\alpha q}{p+q-2\alpha pq}, \; b=\frac{1-\alpha p}{p+q-2\alpha pq}$$
gives the only uniform equilibrium, with common cost $\frac{1-\alpha^2 pq}{p+q-2\alpha pq}$. Otherwise, we have to search for a non-uniform equilibrium.

\medskip

Note that the case $p=q=1$ gives the path $P_2$, $p=q=2$ gives the cycle $C_4$, and $q=1$ gives the star, see below.

\paragraph{Particular case of the star.} The graph now is $K_{n-1,1}$.

\smallskip
(1) If $x_n=0$, i.e., only the center is uncharged, then $x_i=\frac{1}{n-1}$, $1\leq i \leq n-1$, and necessarily $\alpha \geq \frac{1}{n-1}$. In this case, we have an equilibrium, which is uniform only for $\alpha =\frac{1}{n-1}$.

\smallskip
(2) If all leaves are uncharged, then $x_n=1$. This leads to an equilibrium whenever $\alpha \geq 1$, and this equilibrium is uniform for $\alpha =1.$

\smallskip
(3) If there is no uncharged vertex, then $b(1-\alpha)=a(1-(n-1)\alpha)$ and $b+(n-1)a=1$. If $\alpha =1$, then $a=0$ and $b=1$. If $\alpha =\frac{1}{n-1}$, then $b=0$ and $a=\frac{1}{n-1}$. These are two uniform equilibria but with uncharged vertices.

\medskip
If $\alpha <1$ and $\alpha > \frac{1}{n-1}$, then $a$ and $b$ have opposite signs and no uniform equilibrium exists.

\smallskip
If $\alpha >1$ or $\alpha < \frac{1}{n-1}$, then $a$ has the same sign as $b$. We obtain $a=\frac{1-\alpha}{n-2\alpha (n-1)}$ and $b=\frac{1-(n-1)\alpha}{n-2\alpha(n-1)}$. Both numbers are positive and yield a uniform equilibrium.

\medskip
Summarizing for the star: if $\alpha < \frac{1}{n-1}$, there is one uniform equilibrium, with all vertices charged. For all $\alpha \geq \frac{1}{n-1}$, there is one equilibrium with the centre uncharged, which is uniform for $\alpha = \frac{1}{n-1}.$ If $\alpha = 1$, there is one uniform equilibrium with  all  uncharged leaves. If $\alpha >1$, there is one uniform equilibrium with all charged vertices  and one non-uniform equilibrium with all uncharged leaves.

\end{document}